\newcommand{\wt}{\widetilde}
\newtheorem{theorem}{Theorem}[section]
\newtheorem{definition}[theorem]{Definition}
\newtheorem{lemma}[theorem]{Lemma}
\newtheorem*{lemma*}{Lemma}
\newtheorem*{definition*}{Definition}
\newtheorem{fact}[theorem]{Fact}
\DeclareMathOperator*{\dist}{dist}
\DeclareMathOperator*{\E}{\mathbb{E}}
\newcommand{\lsh}{\textsc{QueryLSH}}
\newcommand{\coll}{\textsc{Coll}}
\title{On the adversarial robustness of Locality-Sensitive Hashing in Hamming space}
\date{}
\author{%
  Michael Kapralov \\
  \and
  Mikhail Makarov \\
  \and
  Christian Sohler \\
}
\begin{document}

\maketitle

\begin{abstract}
    Locality-sensitive hashing~\cite{IndykM98} is a classical data structure for approximate nearest neighbor search. It allows, after a close to linear time preprocessing of the input dataset, to find an approximately nearest neighbor of any fixed query in sublinear time in the dataset size.  The resulting data structure is randomized and succeeds with high probability for every fixed query. 
    In many modern applications of nearest neighbor search the queries are chosen adaptively. In this paper, we study the robustness of the locality-sensitive hashing to adaptive queries in Hamming space. We present a simple adversary that can, under mild assumptions on the initial point set, provably find a query to the approximate near neighbor search data structure that the data structure fails on. Crucially, our adaptive algorithm finds the hard query exponentially faster than random sampling.
\end{abstract}

\section{Introduction}

Nearest neighbor search is one of the most fundamental data structure problems in machine learning. In particular, the development of efficient data structures for nearest neighbor search in high dimensions is a challenge
%
%
and finding the exact nearest neighbor often requires a linear scan over the data~\cite{WeberSB98}. Therefore, people also started to investigate approximate nearest neighbor search.
 One important technique to design approximate nearest neighbor data structures is \emph{locality sensitive hashing (LSH)} which has been introduced in the seminal paper by Indyk and Motwani \cite{IndykM98}. LSH is widely used in many applications including
music retrieval \cite{RK08}, video anomaly detection \cite{ZLZRS16}, and clustering \cite{KIW07} (see also the survey \cite{JMNIC21} for further examples). 


LSH is a randomized data structure and as such its performance guarantees depend on its internal randomness. The standard analysis assumes an \emph{oblivious} adversary, i.e. queries may not depend on answers to earlier queries for the guarantees to apply.
%
What are the consequences, if we would like to have the LSH performance guarantees when LSH is used as a data structure in an algorithm? It essentially says that for the guarantees to hold, the algorithm must be able to specify at a single point of time a set $Q$ of query points and it gets all answers to these query points at once. Then it can do any computation with the answers but it may not ask further queries. 
 Such a restricted functionality significantly limits the strength of the guarantees provided by LSH as more often than not future queries depend on the result of earlier queries. 
 Therefore, it would be great to better understand the LSH guarantees in the adaptive setting.
 Even though some work on LSH such as constructions without false negatives \cite{Pagh16,Ahle17} (see also Related Work for further discussions) 
 are motivated by this question, there is 
 still a lack of knowledge about how resilient standard LSH data structures
 are if we allow adaptive adversaries.
 
 In this paper, we address this question by developing a randomized query scheme (adversarial algorithm) for LSH under the Hamming distance with the goal of efficiently generating a query that makes LSH return an incorrect answer. We show that for a wide setting of parameters, if a point set has a somewhat isolated point, then we can make LSH have a false positive (that is, returning no point while there is a near point) much faster than this would happen with non-adaptive queries. 

Let us discuss this further on a simple example. We may use LSH for an approximate version of a $1$-nearest neighbor classifier (see \cite{ABVT20} for a related study), that is, we store the training set in an LSH data structure and on a query we return the label of the (approximate) nearest point in the training set that is returned by LSH.
In this setting LSH will satisfy its standard guarantees when the choice of queries does not depend on previous answers. This happens, for example, when queries are independently sampled from a fixed query distribution.
However, one could potentially design a carefully engineered sequence of queries to construct queries on which the algorithm returns an incorrect answer. This is particularly problematic if the algorithm is used for a sensitive task and relies on strong guarantees.

\textbf{Our results.} Our main result is an algorithm for generating an adversarial query:

%

\begin{theorem}[Informal version of Theorem~\ref{thm:fast}] \label{thm:fast-inf}

 Given an `isolated point' $z$ in a dataset $P$, one can find a point $q$ at a distance at most $r$ from $z$ such that querying LSH with $q$ returns no point. The number of queries to the LSH data structure needed to generate the point $q$ is bounded by $O(\log (cr) \cdot \log(1/\delta))$, where $\delta$ is the failure probability of LSH.
\end{theorem}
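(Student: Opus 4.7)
My plan exploits the one-bit feedback that each LSH query offers the adversary. Under the isolation hypothesis on $z$, any query $q$ within distance $r$ of $z$ either returns $z$ or returns nothing, so each LSH call yields exactly one bit: whether $q$ collides with $z$ in at least one of the $L$ hash tables. Writing $q = z \oplus \mathbf{1}_F$ for a flip set $F \subseteq [d]$, and letting $I_\ell \subseteq [d]$ denote the $K$ coordinates sampled by the hash in table $\ell$, the collision event in table $\ell$ is precisely $I_\ell \cap F = \emptyset$. Finding an adversarial query thus reduces to the combinatorial problem of exhibiting $F$ with $|F|\le r$ that hits every $I_\ell$, using as few LSH probes as possible.

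The algorithm I would build proceeds in $O(\log(cr))$ rounds, each costing $O(\log(1/\delta))$ LSH probes. At round $i$ I maintain a current flip set $F_{i-1}$ and a notional family $\mathcal{S}_i$ of tables $\ell$ with $I_\ell \cap F_{i-1} = \emptyset$ (the adversary cannot inspect $\mathcal{S}_i$ but can detect $|\mathcal{S}_i|>0$ through the single LSH bit). The round's goal is to augment $F$ by a carefully chosen block of coordinates that shrinks $|\mathcal{S}_i|$ by a constant factor. I would implement this through adaptive coordinate halving: start with a candidate pool $P \subseteq [d]\setminus F_{i-1}$, split it in half, form trial augmentations by combining each half with $F_{i-1}$, and use the yes/no LSH response to identify the half still sheltering a surviving table; recurse on that half. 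Running this bisection for $O(\log(1/\delta))$ steps localizes a block whose inclusion in $F$ kills a constant fraction of uncovered tables with constant probability.

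The analysis stitches the rounds together. A geometric budget argument controls the total number of coordinate flips: block sizes can be made to halve across rounds so that the flips telescope to $O(r)$, matching the $r$-ball constraint on $q$. A potential argument on $|\mathcal{S}_i|$, together with the per-round constant-factor shrinkage, shows that after $O(\log(cr))$ rounds no table survives, at which point the LSH returns nothing on the constructed $q$. Failure probabilities of individual halving steps are absorbed into the $\log(1/\delta)$ factor by standard amplification and a union bound over the $O(\log(cr))$ rounds.

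The main obstacle is the per-round progress lemma: proving that an $O(\log(1/\delta))$-long adaptive halving sequence suffices, with constant probability over the remaining LSH randomness, to locate a coordinate block that covers a constant fraction of the currently uncovered tables. This requires a careful probabilistic claim about random $K$-subsets in the standard LSH parameter regime, together with the isolation hypothesis on $z$, which is used to rule out confounding collisions with other dataset points during the probes.
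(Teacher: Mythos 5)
Your reduction of the problem to the one-bit collision feedback and to hitting the supports $I_\ell$ is the right way to think about it, but the proposal has two genuine gaps. First, you never control the \emph{initial} number of surviving tables. If the walk starts at (or near) $z$, all $L = \lceil \lambda n^{\rho}\rceil$ tables collide, and $L$ is polynomial in $n$, not in $cr$; a scheme that kills a constant fraction of surviving tables per round would then need $\Theta(\log L)$ rounds, not $O(\log(cr))$, and, worse, there is no reason a flip budget of only $r$ coordinates can hit the supports of all $L$ tables at all (this is exactly the LSH correctness guarantee working against you). The paper's fix is a separate probabilistic lemma: sample the starting query uniformly at distance $r-t$ from $z$ with $t=\Theta(\lambda)$, and show the expected number of colliding tables is $O(\lambda)$, so with constant probability only $t$ tables survive. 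This initialization is what makes the budget arithmetic close: at most $t$ further flips, one per surviving table, keep the final query within distance exactly $r$ (note that ``telescoping to $O(r)$'' is not good enough --- at distance between $r$ and $cr$ the answer $\bot$ is legal, so no false negative is exhibited).

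Second, your per-round progress lemma --- that an $O(\log(1/\delta))$-step adaptive halving locates a coordinate block killing a constant fraction of the surviving tables --- is both unproven and structurally mismatched with your round count, and the bisection invariant is not in place: to binary search you need an endpoint that provably does \emph{not} collide, which requires a separate argument (the paper proves that a random monotone extension of the current query out to distance $cr$ collides with $z$ in no table, with probability $1-O(1/\lambda)$). The paper's scheme transposes your two factors: it runs $O(\lambda)=O(\log(1/\delta))$ outer rounds, and in each round does a single binary search of $O(\log(cr))$ queries along the monotone path from the current query to that non-colliding point at distance $cr$, finding one adjacent pair where the collision bit flips; that single coordinate lies in the support of every table still colliding at the left endpoint, so one added flip kills at least one table. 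This per-flip, per-table accounting is what makes both the distance budget ($r-t$ plus at most $t$ flips) and the query count $O(\log(cr)\cdot\lambda)$ go through, and it avoids any claim about blocks hitting a constant fraction of random $k$-subsets, which in the standard parameter regime would force blocks of size about $d/k \approx cr/\ln n$ and break the distance-$r$ constraint.
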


Note that to find such a point through random sampling, one would have to make $\Omega(1/\delta)$ queries in expectation \cref{thm:lsh-guarantee}, which is exponentialy slower in $\delta$ than the proposed adversary. We then evaluate our data structure experimentally in various settings to understand how the efficiency and effectiveness of our approach depends on the various parameters.

\section{Related Work} \label{sec:related}

\textbf{Monte-Carlo Locality-Sensitive Hashing.}
Most of the work dedicated to Locality-Sensitive Hashing was done in the Monte-Carlo model, which we consider to be the standard model in this paper. That is, the goal is to construct a data structure with deterministically bounded query time that correctly answers the queries with high probability.
Constructions of locality-sensitive hash families are known for several different metrics, including Hamming distance~\cite{IndykM98,AndoniR15}, $\ell_1$~\cite{datar2004locality}, the Euclidean metric~\cite{AndoniI06}, Jaccard similarity~\cite{Charikar02,DahlgaardKT17}, angular distance~\cite{AndoniILRS15,KennedyW17,Linips12} and more.
Asymmetric LSH solutions, which allow to reduce query time at the expense of an increase in space, or reduce space at the expense of increasing query time, have received significant attention in the literature and are quite well understood by now~\cite{Panigrahy06,Kapralov15,Christiani17,AndoniLRW17}. Distance sensitive hash families, where collision probabilities are not necessarily monotone in the distance between points being hashed, have also been constructed~\cite{0001CP018}.
%
 %
 Data-dependent results, which adapt the locality-sensitive hash family to the dataset to achieve better exponents, are also known~\cite{AndoniINR14,AndoniLRW17}. 

\textbf{Las-Vegas Locality Sensitive Hashing.}
In an attempt to avoid the problem of having false negatives,
another line of work seeks to construct LSH instances that answer the queries deterministically correctly, but the running time is a random variable and guarantees are usually on its expectation.
It was first shown by \cite{Pagh16} that such a scheme exists, however this scheme for general parameter settings requires more space than the standard LSH. A later work by \cite{Ahle17} achieves the same space-time trade-off as in \cite{IndykM98}. \cite{Wei22} further improves on it for $\ell_p$, and offers a data-dependent construction following \cite{AndoniI06}.
%
%
Although Las Vegas constructions avoid the problem of having false negatives, their guarantees still hold only against an oblivious adversary. It is entirely possible that an adaptive adversary can find a sequence of queries with an average response time much larger than what is guaranteed against an oblivious adversary. 
In any case, investigation of such adversaries lies outside of the scope of this paper.


\textbf{Robustness to adaptive adversaries.}
One of the main motivations of this paper is to study the behavior of a classic data structure solving against an adaptive adversary. One of the most studied class of algorithms that is relevant to our paper is linear sketches. Some of the notable early works here include \cite{HardtW13}, which proposed an efficient attack on the linear sketch for norm estimation, and \cite{NelsonNW12}, which proposed deterministic sketch constructions for a range of problems that work for all inputs and provide lower bounds on their size. In \cite{KushilevitzOR00} an adversarially robust approach to solving approximate nearest neighbor in Hamming space is proposed, which does not use LSH. It is improved upon by \cite{CherapanamjeriN20}, which gives a robust construction for vector $\ell_p$-norm estimation, for $0 < p \leq 2$. On the other side of the spectrum, an adaptive attack on CountSketch was proposed by \cite{CohenNSS23,Cohen0NSSS22}, with the latter also proposing a robustifycation of this sketch.

Another fruitful line of work covers the use of differential privacy to make existing algorithms more robust against adaptive adversaries \cite{BeimelKMNSS22,WoodruffZ21,HassidimKMMS22,Ben-EliezerJWY22,AttiasCSS23}. The main idea is to use differentially private mechanisms to hide the random bits of the algorithm from the adversary. This approach turned out to be applicable to a wide range of estimation problems.


\section{Preliminaries}
\label{sec:prelims}
For $x > 0$, $\log x := \log_2 x$ will denote binary logarithm of $x$. For $n \in \mathbb{N}$, $[n] := \{ 1, \ldots, 
n\}$ will be the set of integers from $1$ to $n$. For a point $p \in \{0, 1\}^d$ and $i \in [d]$, $p_i$ is the value of the point $p$ along $i$-th dimension. For two points $p, q \in \{0, 1\}^d$, $\dist(p, q) = \sum_{i \in [d]} |p_i - q_i|$ is their Hamming distance.

\textbf{Basics of Locality Sensitive Hashing.}
\label{sec:basics}
Locality sensitive hashing is a technique to derive randomized data structures for near neighbor search. This can be formalized as follows.
A (randomized) data structure for the $(c,cr)$-approximate
near neighbor problem in a metric space $(X,\dist)$ preprocesses
a point set $P\subset X$ in such a way that for a query point $q\in X$ with probability at least $1-\delta$ it (a) returns $p \in P$ with $\dist(p,q) \le cr$, if there is $p'\in P$
with $\dist(p',q) \le r$,
(b) returns $ \bot$, if there is no point $p\in P$ with $\dist(p,q) \le cr$, and
(c)
returns either $\bot$ or $p\in P$ with $\dist(p,q) \le cr$, otherwise.
Here $\delta$ is the failure probability of the data structure.
In this paper, we will consider the Hamming space, i.e. $X=\{0,1\}^d$ and $\dist$ is the Hamming distance and in the following discussion we focus on this setting.



The key component of LSH schemes is a locality-sensitive hash family. The idea behind them is that a function from such a family is more likely to map close points to the same bucket than far ones.

\begin{definition}[\cite{IndykM98}]
    A hash family $\mathcal{H}$ is $(r, cr, p_1, p_2)$-sensitive if for every $p, q \in \{0, 1\}^d$, the following holds. Let $h$ be chosen uniformly at random from $\mathcal{H}$. If $\dist(p, q) \leq r$, then $\Pr[h(p) = h(q)] \geq p_1$. Otherwise, if $\dist(p, q) \geq cr$, $\Pr[h(p) = h(q)] \leq p_2$.
\end{definition}

For a $d$-dimensional Hamming space $\{0, 1\}^d$, the locality-sensitive hash family has a very simple form \cite{IndykM98}. Let 
$ \label{eq:def_H}
    \mathcal{H} = \{h^{(i)}, i \in [d]: \forall p \in \{0, 1\}^d, h^{(i)}(p) = p_i\}
$
be the elementary hash function family. That is, it consists of hash functions $h^{(i)}:\{0, 1\}^d \to \{0, 1\}$, which project points from $\{0, 1\}$ to their $i$-th bit. It is known that

\begin{fact}[\cite{IndykM98}]
For any $c$, $r$ such that $r < cr \leq d$, the elementary hash function family $\mathcal{H}$ is $(r, cr, 1 - \frac{r}{d}, 1 - \frac{cr}{d})$-sensitive.
\end{fact}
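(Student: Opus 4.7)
The plan is to directly compute the collision probability $\Pr[h(p) = h(q)]$ when $h$ is drawn uniformly from $\mathcal{H}$ and then specialize to the two distance regimes. Since $\mathcal{H}$ has exactly $d$ functions, indexed by the coordinate $i \in [d]$ they project onto, sampling $h$ uniformly from $\mathcal{H}$ is the same as sampling $i \in [d]$ uniformly and setting $h = h^{(i)}$. So the collision event $h(p) = h(q)$ is exactly the event $p_i = q_i$.

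Next, I would count the coordinates on which $p$ and $q$ agree. By definition of Hamming distance, there are exactly $\dist(p,q)$ coordinates where $p_i \ne q_i$, and hence $d - \dist(p,q)$ coordinates where $p_i = q_i$. Therefore, for a uniformly random $i \in [d]$,
\[
\Pr[h(p) = h(q)] \;=\; \Pr_{i \sim [d]}[p_i = q_i] \;=\; \frac{d - \dist(p,q)}{d} \;=\; 1 - \frac{\dist(p,q)}{d}.
\]

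Finally, I would plug in the two cases from the definition. If $\dist(p,q) \le r$, then $1 - \dist(p,q)/d \ge 1 - r/d$, which gives the lower bound $p_1 = 1 - r/d$. If $\dist(p,q) \ge cr$, then $1 - \dist(p,q)/d \le 1 - cr/d$, which gives the upper bound $p_2 = 1 - cr/d$. The hypothesis $cr \le d$ only needs to be used to ensure that $p_2 \ge 0$, so the stated sensitivity parameters are meaningful; there is no real obstacle in the argument, as the only step is the observation that coordinate-projection collisions are a uniform random sample of the bitwise agreement pattern.
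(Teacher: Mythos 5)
Your proof is correct and is exactly the standard argument behind this fact from \cite{IndykM98}, which the paper cites without reproving: a uniform $h \in \mathcal{H}$ is a uniform coordinate, so $\Pr[h(p)=h(q)] = 1 - \dist(p,q)/d$, and monotonicity in $\dist(p,q)$ gives both thresholds. Nothing is missing.
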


Indyk and Motwani give the following guarantee (assuming that the points come from a $d$-dimensional space):
\begin{theorem}[\cite{IndykM98}] \label{thm:lsh-guarantee}
Suppose there exists a $(r,cr,p_1,p_2)$-sensitive family of hash functions. Then there is a data structure for the $(r,cr)$-approximate near neighbor problem (with $\delta=1/2$) that uses $O(dn+n^{1+\rho})$ space and requires $O(n^\rho)$
evaluations of hash functions, where $\rho = \frac{\log (1 - r/d)}{\log (1 - cr/d)}$.
\end{theorem}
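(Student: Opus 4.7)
The plan is to follow the standard amplification scheme of Indyk--Motwani: boost the gap between $p_1$ and $p_2$ by concatenation and then reduce the false negative rate by independent repetition. Concretely, set $k = \lceil \log_{1/p_2} n \rceil$ and $L = \lceil n^\rho \rceil$ with $\rho = \log p_1 / \log p_2$. For each $\ell \in [L]$, independently draw $k$ hash functions $h_{\ell,1},\dots,h_{\ell,k}$ from the $(r,cr,p_1,p_2)$-sensitive family $\mathcal{H}$, and define the composite function $g_\ell(p) = (h_{\ell,1}(p),\dots,h_{\ell,k}(p))$. The data structure consists of $L$ hash tables $T_1,\dots,T_L$, where $T_\ell$ stores each input point $p \in P$ in the bucket labeled $g_\ell(p)$. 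On a query $q$, the data structure iterates over $\ell = 1,\dots,L$, scans the points in bucket $g_\ell(q)$ of $T_\ell$, and returns the first point found at Hamming distance at most $cr$ from $q$, aborting after a total of $3L$ points have been examined across all tables.

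Next I would verify the two correctness events and control their probabilities separately. First, for any fixed point $p' \in P$ with $\dist(p',q) \geq cr$, the definition of $(r,cr,p_1,p_2)$-sensitivity gives $\Pr[g_\ell(p') = g_\ell(q)] \leq p_2^k \leq 1/n$, so the expected number of such far-point collisions across all $L$ tables is at most $L$. By Markov's inequality, with probability at least $2/3$ the total number of far points scanned is at most $3L$, so the truncation at $3L$ examinations rules out spurious termination only with small probability. Second, assuming there exists $p^* \in P$ with $\dist(p^*,q) \leq r$, the collision probability is $\Pr[g_\ell(p^*) = g_\ell(q)] \geq p_1^k$. Using $k = \log_{1/p_2} n$, a direct computation gives $p_1^k = n^{-\rho}$, so the probability that $p^*$ fails to collide with $q$ in all $L$ tables is at most $(1 - n^{-\rho})^{L} \leq 1/e$. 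A union bound over these two events then yields a failure probability of at most $1/2$ (possibly after adjusting the constant in $L$).

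For the resource bounds, the space is $O(dn)$ for storing the dataset itself plus $O(nL) = O(n^{1+\rho})$ for the $L$ hash tables (each containing $n$ pointers). The query cost is dominated by $kL = O(n^\rho \log n)$ hash evaluations to compute $g_1(q),\dots,g_L(q)$, and at most $3L = O(n^\rho)$ explicit distance computations before abort; stated in units of hash evaluations this is $O(n^\rho)$ as claimed (absorbing the $\log n$ factor into the statement, as in \cite{IndykM98}, or charging the $d$-bit distance comparisons to the $O(dn)$ scan budget).

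The only delicate step is calibrating the constants so that both the ``no far-point flood'' and the ``close-point collides somewhere'' events hold simultaneously with probability at least $1/2$; this is the main obstacle and is handled purely by tuning the constants multiplying $k$ and $L$, using that $\rho \in (0,1)$ since $p_1 > p_2$. For the Hamming instantiation mentioned in the theorem, $\rho = \log(1-r/d)/\log(1-cr/d)$ follows by substituting the values $p_1 = 1 - r/d$, $p_2 = 1 - cr/d$ provided by the elementary family fact.
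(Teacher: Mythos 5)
Your proposal is correct and follows essentially the same route as the paper's own sketch of this cited result: the standard Indyk--Motwani amplification with $k=\lceil \log_{1/p_2} n\rceil$ concatenated hash functions and $L=O(n^\rho)$ tables, bounding far-point collisions by Markov and the miss probability of a near point by $(1-n^{-\rho})^L\le 1/e$. The only loose ends (the ceiling on $k$, the $\log n$ factor in counting elementary evaluations, and tuning constants so the union bound lands below $1/2$) are exactly the ones you flag, and they are handled the standard way, so nothing substantive is missing.
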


In the following, we briefly describe the construction of an LSH data structure, which can be used to prove the above theorem. An LSH data structure consists of a multiset of $G$ of $L$ hash functions in the  product $\mathcal{H}^k=(h_1, h_2, \ldots, h_k), h_j\in \mathcal{H},j\in \{1, 2,\ldots, k\},$ i.e.
$
G \subseteq \mathcal{H}^k
$
In other words, each element of $G$ is constructed  by sampling a sequence of $k$ elementary hash functions $h_1, \ldots, h_k$ uniformly at random from $\mathcal{H}$ and then setting $g: \{0, 1\}^d \to \{0, 1\}^k$, $g(x) = (h_1(x), \ldots, h_k(x))$ to be their concatenation. That is, $\forall p \in \{0, 1\}^d$, $g(p) = (h_1(p), \ldots, h_k(p))$. 
The intuition is that concatenating hash functions from $\mathcal{H}$ allows to exponentially increase the gap between probabilities $p_1$ and $p_2$
in such a way that if we sample $L$ hash functions then two
points $p,q$ with distance at most $r$ are likely to have at 
least one $g_i$ with $g_i(p) = g_i(q)$ and we can recover the point. At the same time, the number of points with $\dist(p,q) > cr$ and $g_i(p) = g_i(q)$ will be small, so that the recovery can be done efficiently.

The proof of \cite{IndykM98} essentially follows by choosing the parameters $k$ and $L$ appropriately, which is done as follows. We set
$p_1 = 1-r/d$ and $p_2 = 1-cr/d$ such that $\mathcal H$
is $(r,cr,p_1,p_2)$-sensitive. We define $\rho = \frac{\log 1/p_1}{\log 1/p_2}$ and $\ell = n^{\rho}$. For an input set $P\subseteq \{0,1\}^d$ of $n$ points we set $k = \lceil \log_{1/p_2} n \rceil$. For this setting of parameter, we get the guarantees provided in the above Theorem. In order to get a smaller failure probability we set $L=\lceil\lambda \ell \rceil$, which results in a failure probability of $1-\exp({-\Omega(\lambda)})$.

One particular property of the above LSH construction is that
there is a chance of incorrectly returning that there is no near point (i.e. returning $\bot$). Such an incorrect answer will be called a \emph{false negative}.

\begin{definition}[False negative]
Consider an LSH data structure initialized for the point set $P\subseteq \{0,1\}^d$. We say that the answer to a query $q \in \{0, 1\}^d$ is a \textbf{false negative} query for this LSH instance, if $\lsh(q, G) = \bot$ but there exists a point $p \in P$ such that $\dist(p, q) \leq r$.
\end{definition}


\textbf{Defining the adversary.}
The goal of this paper is to study how one can force LSH to return a false negative by doing adaptive queries. We take the role as an \emph{adversary} and our goal is design an \emph{adversarial algorithm} that has access to an LSH instance and quickly forces it to return a false negative. In particular, we provide guarantees on the number of queries required until LSH makes a false negative.

In the following, we will describe our formal setting. Our LSH instance
is constructed according to the scheme described in Section \ref{sec:basics}.
Recall that this LSH scheme samples a multiset $G$ of $L$ hash functions $g_1,\dots, g_L$ uniformly at random from the set of allowed hash functions of the form $g:\{0,1\}^d \rightarrow \{0,1\}^k$ with $g(p) = (h_1(p),\dots, h_k(p))$ for elementary hash function $h_j$.  
The LSH scheme receives as input a point set $P\subseteq \{0,1\}^d$ of size $n$ and parameters $n,d,c,r$ and $\lambda$ and calculates $\ell, \rho, L$ and $k$ as described earlier.
Our adversarial algorithm may use the input point set $P$, its size $n$ and dimension $d$, and the parameters $r,c$ and $\lambda$ of the LSH scheme and can interact with the LSH data structure using the procedure \ref{alg:lsh_query} described below. 
That is, the LSH data structure is in a black box and the only interaction 
uses procedure  \ref{alg:lsh_query}. In particular, the randomness used to construct the LSH instance and so the hash functions is unknown to the adversarial algorithm.

\begin{algorithm}
\begin{algorithmic}
\caption{QueryLSH($q, r, c$)} \label{alg:lsh_query}
\STATE $S \gets \emptyset$
\FORALL{$g \in G$}
    \STATE $S=S \; \cup  \{p\in P: g(p) = g(q) \text{ and } \dist(p,q) \le cr \}$
\ENDFOR
\RETURN any point in $S$, or $\bot$ if $S$ is empty
\end{algorithmic}
\end{algorithm}
Note that in an LSH implementation one would store all buckets of $g$ in a hash table. This way, line 3 can be implemented without going through all points. 
We also remark that since we do not have control of the parameters and point set our adversarial algorithm cannot be expected to work in all settings (for example, if the number of hash functions is very large, then no false negatives exist).

\section{Adversarial Algorithm}
\label{sec:alg_analysis}

Define the set 
$
\coll(p, q)=\{g \in G: g(p)=g(q)\}$ to be the set of hash functions in $G$ that collide on $p$ and $q$.
Our algorithm starts with finding a point $z$ in the dataset such that any other point is at least at a distance of $2cr$ from $z$. We assume that such a point exists, and we experimentally explore the case when it doesn't in \cref{sec:experim}. We also show that this assumption is reasonable for a randomly chosen point set in \cref{lem:rand_point_sep}. We refer to $z$ as \textbf{origin point}. 

\textbf{Algorithm outline.}
The idea of the algorithm is to query a random point $q$ at distance $r - t$ such that the expected number of hash functions hashing $q$ to the same bucket as $z$ is at most $t/2$. We show that this is true  when $t=2e^2 (\lambda + 1)$, which is the value we use throughout this section. This implies that with a constant probability $|\coll(q,z)| \leq t$. The algorithm now does $t$ iterations and in each of them it moves point $q$ away from $z$ by flipping one bit in it in such a way that the size of the set $\coll(q,z)$ also decreases by at least $1$. 

To find this bit, the algorithm flips bits in $q$ at $cr - \dist(q, z)$ randomly chosen positions among those where $q$ is equal to $z$ and obtains a point $\tilde{q}$. $\tilde{q}$ is located at distance $cr$ from $z$, so with a high probability $\lsh(\tilde{q}) = \bot$. Now consider a \textbf{path} between $q$ and $\tilde{q}$, i.e. a sequence of points starting with $q$ and ending with $\tilde{q}$, where each two consecutive points differ in one bit and where each point is further away from $z$ than its predecessor. Because $\tilde{q}$ doesn't hash with $z$ but $q$ does, there are two consecutive points $q'$, $q''$ on this path with the same property. Consider the bit where they differ. Because $\coll(q', z) \neq \emptyset$ and $\coll(q'', z) = \emptyset$, this bit must lie in the support of all functions in $\coll(q', z)$.\footnote{
for an elementary hash function $h^{(i)} \in \mathcal{H}$ its \textbf{support} is defined as the set $\{i\}$.
For a hash function $g \in G$, $g = (h_1, \ldots, h_k)$ its \textbf{support} is the union of supports of $h_1, \ldots, h_k$.}
Therefore, by flipping it in $q$ the algorithm removes these functions from $\coll(q, z)$.
It is easy to see that at the end we get a point $q$ at a distance at most $r$ from $z$ such that $\coll(q,z) = \emptyset$, so $\lsh(q) = \bot$ and $q$ is a false negative.

In the following we
prove \Cref{lem:near_coll,lem:far_coll}, which are the key results underpinning the correctness of our algorithm. \Cref{lem:near_coll} essentially shows that a point $q$ sampled at distance $r - t$ collides with $z$ in at most $O(t)$ hashings with constant probability. \Cref{lem:far_coll} that says that if we move this point randomly and monotonically away from $z$ to distance $cr$, it will not hash at all with $z$. 

A key property of these lemmas is that, unlike in the standard analysis of LSH, we fix the hash functions and sample points at random. This gives us a different perspective on how LSH behaves when queried with random points, which reflects the fact that our goal is to construct an adversary for LSH.

First, we state some intermediate results bounding the size of $k$ and the support of hash functions. Their proof is given in \cref{sec:appendix}.

\begin{restatable}[Bounds on $k$]{lemma}{kbounds} \label{cor:k_bounds}
If $cr/d \leq 1/5$ and $n > e$, one has
$2 \frac{d}{cr} \ln n \geq k \geq 0.5 \frac{d}{cr} \ln n$.

\end{restatable}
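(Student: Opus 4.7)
The plan is to unpack the definition $k = \lceil \log_{1/p_2} n \rceil$ where $p_2 = 1 - cr/d$, and sandwich $-\ln(1 - cr/d)$ between elementary bounds on the logarithm so both sides of the desired inequality fall out.

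First I would write $k = \lceil \ln n / (-\ln(1 - cr/d)) \rceil$ and set $x := cr/d$, which by hypothesis lies in $(0, 1/5]$. The two standard inequalities I will invoke are $-\ln(1-x) \geq x$ (valid for all $x \in [0,1)$) and $-\ln(1-x) \leq x/(1-x)$ (obtained by comparing the Taylor series $\sum_{j \geq 1} x^j/j$ term-by-term with the geometric series $\sum_{j \geq 1} x^j$). Under $x \leq 1/5$ the second bound gives $-\ln(1-x) \leq x/(1-x) \leq (5/4)x$.

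For the upper bound on $k$, the first inequality yields $\ln n / (-\ln(1-x)) \leq \ln n / x = (d/cr)\ln n$. The ceiling inflates this by at most $1$, and since $x \leq 1/5$ implies $(d/cr)\ln n \geq 5\ln n > 5 > 1$ (using $n > e$), we can absorb the additive $1$ into the multiplicative factor of $2$, giving $k \leq 2(d/cr)\ln n$. For the lower bound, the second inequality yields $\ln n / (-\ln(1-x)) \geq (4/5)\ln n / x = (4/5)(d/cr)\ln n \geq 0.5(d/cr)\ln n$, and the ceiling only helps.

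There is no real obstacle here; the one place to be careful is making sure the additive loss from the ceiling in the upper bound is dominated by the leading term, which is exactly where the hypothesis $n > e$ (hence $\ln n > 1$) and $cr/d \leq 1/5$ (hence $d/cr \geq 5$) combine to give plenty of slack.
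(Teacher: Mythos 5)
Your proposal is correct and follows essentially the same route as the paper: unpack $k = \lceil \ln n / (-\ln(1 - cr/d))\rceil$, sandwich $-\ln(1-x)$ between elementary bounds for $x = cr/d \leq 1/5$, and absorb the ceiling's additive $1$ using $(d/cr)\ln n \geq 1$. The only cosmetic difference is your upper bound $-\ln(1-x) \leq x/(1-x) \leq (5/4)x$ in place of the paper's $-\ln(1-x) \leq x + x^2$, which yields the same conclusion with the same amount of slack.
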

\begin{algorithm}[h]
\caption{Adaptive adversarial walk from $z$. } \label{alg:simple}
\begin{algorithmic}[1]
\REQUIRE{Origin point $z$, parameters $r$, $c$, and $\lambda$}
\STATE $t \gets 2 e^2 (\lambda + 1) $
\STATE Sample $q$ uniformly at random from all points at distance $r - t$ from $z$\label{line:sample_point}
\WHILE{$\lsh(q) = z$} \label{line:outer_loop}
    \IF{$\dist(q, z) \geq r$}
    \RETURN $\bot$
    \ENDIF
    \STATE $q' \gets q$ \label{line:slow_inner_start}
    \WHILE{$\lsh(q') = z$} \label{line:inner_loop}
        \STATE Let $I = \{i\in [d]: q_i' = z_i\}$
        \STATE Select $j\in I$ uniformly at random 
        \STATE Set $q_j' = 1 - q_j'$
        \IF{$\dist(q', z) > cr$}
            \RETURN $\bot$
        \ENDIF
        \label{line:slow_inner_end}
    \ENDWHILE
    \STATE Set $q_j = 1 - q_j$
\ENDWHILE
\RETURN $q$
\end{algorithmic}
\end{algorithm}

\begin{restatable}[Support lower bound]{lemma}{supplb} \label{lem:supp_lb}
    Let $\lambda \le n$ and $n > e$. Then with probability at least $1 - \frac{1}{n}$ the size of the support of each of the functions $g \in G$ is at least $k - 7 \ln n \cdot \max\{1, \frac{k^2}{2d}\}$.
\end{restatable}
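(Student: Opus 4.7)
The plan is to bound the number of repeated elementary hashes inside each $g \in G$ via a birthday-style occupancy argument, and then union-bound over $g \in G$. Fix a single $g = (h_1, \ldots, h_k) \in G$, where each $h_j$ is drawn independently and uniformly from $\mathcal{H}$. Since the support of $h^{(i)}$ is $\{i\}$, the support of $g$ is exactly the set of distinct indices appearing in the tuple $(h_1, \ldots, h_k)$, so $X := k - |\mathrm{supp}(g)|$ equals the number of positions $j \in [k]$ at which $h_j$ duplicates some $h_i$ with $i < j$. Conditional on $h_1, \ldots, h_{j-1}$, position $j$ is a duplicate with probability at most $(j-1)/d$, so $X$ is stochastically dominated by $Y = \sum_{j=1}^k B_j$ with $B_j \sim \mathrm{Ber}((j-1)/d)$ independent. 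In particular $\mu := \mathbb{E}[Y] \leq k(k-1)/(2d) \leq k^2/(2d)$.

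Next I would apply the multiplicative Chernoff bound $\Pr[Y \geq R] \leq (e\mu/R)^R$ (which holds for $R \geq \mu$) with $R = 7\ln n \cdot \max\{1, k^2/(2d)\}$, splitting into cases. If $k^2/(2d) \geq 1$, then $R = 7\mu \ln n$ and the bound becomes $(e/(7\ln n))^{7\mu \ln n}$; since $n > e$ gives $7 \ln n \geq e^2$, this is at most $e^{-7\mu \ln n} = n^{-7\mu} \leq n^{-7}$. If $k^2/(2d) < 1$, then $\mu < 1 \leq R/e^2$, and the same estimate gives $(e\mu/R)^R \leq (e/(7\ln n))^{7\ln n} \leq n^{-7}$.

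Finally, I would union-bound over $g \in G$. Since $\rho = \log(1/p_1)/\log(1/p_2) \leq 1$ (as $c \geq 1$) and $\lambda \leq n$, we have $L = \lceil \lambda n^\rho \rceil \leq n^2 + 1$, so the total failure probability is at most $L \cdot n^{-7} \leq 2n^{-5} \leq 1/n$ for all $n \geq 2$. The main thing to be careful about is the Chernoff step in the two regimes: when $\mu \geq 1$ we rely on a multiplicative slack of factor $7 \ln n$, while when $\mu < 1$ an additive slack of $7 \ln n$ is what actually drives concentration — the $\max\{1, k^2/(2d)\}$ in the statement is designed precisely to unify these two cases. The reduction from $X$ to $Y$ via stochastic dominance and the final union bound are otherwise routine.
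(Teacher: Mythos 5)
Your proposal is correct and follows essentially the same route as the paper's proof: counting duplicated elementary hashes, dominating them by a sum of independent Bernoulli variables with means $(j-1)/d$, applying a Chernoff bound split into the cases $k^2/(2d)\geq 1$ and $k^2/(2d)<1$, and union-bounding over the at most $\lceil \lambda n^{\rho}\rceil \leq n^2$ functions in $G$ (the paper gets $1/n^3$ per function, you get $n^{-7}$; either suffices). The only small blemish is writing $R = 7\mu\ln n$ in the first case, since $\mu$ may be strictly smaller than $k^2/(2d)$; this is harmless because the tail bound $(e\mu/R)^R$ only improves when $\mu$ is replaced by its upper bound $k^2/(2d)$, and $R \geq 7\ln n$ still gives $n^{-7}$.
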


The next two lemmas lower bound the size of the support of hash functions in LSH when they are chosen as described in Section \ref{sec:prelims}. 
We now show the first lemma, which bounds the number of hash functions hashing a point at distance $r - t$ from $z$ in the same bucket with $z$.
We outline the proof below, and present the full proof in \cref{app:sec41}. 

\begin{restatable}{lemma}{nearcoll} \label{lem:near_coll}
Let $n > e$, $8e^2 (\lambda + 1) \ln n \leq cr$,  $28 \ln^3 n / c^2 \leq r \leq d/(14 \ln n)$,  $r \leq d/(5c)$ and $c < \ln n$. Let the hash functions $G$ be such that the size of the support of each is at least $k - 7 \ln n \cdot \max\{1, \frac{k^2}{2d}\}$.
Let $t =  2 e^2 (\lambda + 1)$.
Let $q$ be a point at distance $r - t$ from $z$ chosen uniformly at random. The expected size of $\coll(q, z)$ is at most $e^2 (\lambda + 1)$,  where the expectation is taken with respect to the randomness of $q$.
\end{restatable}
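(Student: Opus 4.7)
The plan is to bound the expectation by linearity, writing $\E[|\coll(q,z)|] = \sum_{g \in G} \Pr_q[g(q) = g(z)]$, and to control the per-function collision probability using the support lower bound from \Cref{lem:supp_lb}, while bounding $L$ using the LSH parameter settings. The hypothesis on the supports gives $|S_g| \geq k - \Delta$ for every $g \in G$, where $\Delta := 7 \ln n \cdot \max\{1, k^2/(2d)\}$, and the aim is to show that both the support deficit $\Delta$ and the distance shortage $t$ have only a mild multiplicative effect on the collision count.

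For a fixed $g$ with support $S_g$, since $q$ differs from $z$ in exactly $r - t$ uniformly chosen coordinates, the event $g(q) = g(z)$ is exactly the event that none of those coordinates lies in $S_g$, so $\Pr[g(q) = g(z)] = \binom{d - |S_g|}{r-t}/\binom{d}{r-t}$. Writing this as the telescoping product $\prod_{i=0}^{|S_g|-1} (d - (r-t) - i)/(d-i)$ and using that each factor is at most $1 - (r-t)/d$, I obtain $\Pr[g(q) = g(z)] \leq (1 - (r-t)/d)^{k - \Delta}$. For the number of hash functions I use $L \leq (\lambda + 1)\ell$ (since $\ell \geq 1$) together with the identity $\ell = n^\rho \leq (1 - r/d)^{-k}$, which follows from the choice of $k$ (making $p_2^k \leq 1/n$) and the definition $\rho = \log(1/p_1)/\log(1/p_2)$.

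Combining these gives
\[
\E[|\coll(q,z)|] \leq (\lambda+1) \left(\frac{1 - (r-t)/d}{1 - r/d}\right)^k \cdot (1 - (r-t)/d)^{-\Delta}.
\]
The first factor equals $(1 + t/(d-r))^k \leq \exp(tk/(d-r))$. Using $k \leq 2(d/cr)\ln n$ from \Cref{cor:k_bounds} and $r/d \leq 1/14$ (so $d - r \geq 13d/14$), together with the hypothesis $8 e^2 (\lambda+1)\ln n \leq cr$ and the definition $t = 2e^2(\lambda+1)$, this exponent is at most $7/13$. The second factor is at most $\exp((14/13)\,\Delta r/d)$, so it suffices to show $\Delta r/d \leq 1/2$; then multiplying the two factors yields $(\lambda+1) e^{14/13} \leq (\lambda+1) e^2$ as required.

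The main delicate step is bounding $\Delta r/d$, and this is where the somewhat odd-looking hypotheses on $r$ come into play via a case split on the $\max$ inside $\Delta$. When $k^2/(2d) \leq 1$ the deficit $\Delta$ equals $7 \ln n$, and the hypothesis $r \leq d/(14 \ln n)$ directly gives $\Delta r/d \leq 1/2$. When $k^2/(2d) > 1$, however, $\Delta$ grows quadratically in $k$, and one must invoke the lower bound $r \geq 28 \ln^3 n / c^2$ to conclude that $k^2 r / d^2 \leq 1/(7 \ln n)$ after plugging in $k \leq 2(d/cr)\ln n$, which again yields $\Delta r/d \leq 1/2$. The rest is bookkeeping to keep the constants tight enough that $e^{14/13}$ fits under the target $e^2$.
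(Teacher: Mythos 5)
Your proposal is correct, and it follows the same overall route as the paper: linearity of expectation, a per-function collision bound driven by the support lower bound, the estimate $L\le(\lambda+1)\ell\le(\lambda+1)(1-r/d)^{-k}$, the bound $k\le 2\frac{d}{cr}\ln n$ from \cref{cor:k_bounds}, and the same case split on $\max\{1,k^2/(2d)\}$ using $r\le d/(14\ln n)$ and $r\ge 28\ln^3 n/c^2$. The two places where your bookkeeping differs are both fine and in fact slightly cleaner. First, for the per-function probability you use the exact hypergeometric expression $\binom{d-|S_g|}{r-t}/\binom{d}{r-t}$ and telescope over the support, getting $(1-(r-t)/d)^{k-\Delta}$, whereas the paper couples the uniform flip set with sampling $r-t$ coordinates with replacement and gets the transposed bound $(1-(k-x)/d)^{r-t}$; by symmetry of the hypergeometric both are valid, and yours needs no coupling step. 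Second, by grouping the product as $\bigl(\frac{1-(r-t)/d}{1-r/d}\bigr)^k(1-(r-t)/d)^{-\Delta}=(1+\frac{t}{d-r})^k(1-(r-t)/d)^{-\Delta}$ you cancel the dominant $kr/d$ term exactly, so you never need the lower-bound half of \cref{lem:exp_bounds} ($e^{-x-x^2}\le 1-x$) nor the paper's extra error term $kr^2/d^2\le 1$; this is why you land at $(\lambda+1)e^{14/13}$ instead of the paper's $(\lambda+1)e^2$, comfortably within the stated bound. The only steps worth spelling out in a full write-up are the one-line justifications $(1+u)\le e^u$ and $-\ln(1-y)\le\frac{y}{1-y}\le\frac{14}{13}y$ for $y\le 1/14$, plus the trivial remark that the hypergeometric formula degenerates correctly when $d-|S_g|<r-t$.
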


\begin{proof}[Proof outline]
    We fix an arbitrary set $G$ of $L$ hash functions that satisfy the condition of the lemma. 
    By linearity of expectation, 
$
\E [|\coll(q,z)|] = \sum_{g\in G} \E[X_g] = \sum_{g\in G} \Pr[g(q) = q(z)],
$
where $X_g$ is the indicator random variable for the event $g(q)=g(z)$ and the randomness is over the choice of $q$ as in the lemma.
    In the following we consider an arbitrary $g\in G$
    and derive an upper bound for the collision probability
    of $q$ and $z$.
    Indeed, let $x = 7 \ln n \cdot \max\{1, \frac{k^2}{2d}\}$.
    We first observe (see \cref{app:sec41} for more details) that
    $\label{eq:0wireg90hweg9_short}
    \Pr[g(q) = g(z)] \leq \left(1 - \frac{k - x}{d}\right)^{r- t}.
    $
    The basic argument is that sampling a point at distance $r$ from $z$ is not worse than just picking $r$ bits to flip in $z$ independently at random, and then bounding the probability that none of those chosen bits lie in the support of $g$.

    Next we can use this upper bound to argue that
    $
    \sum_{g\in G} \Pr[g(q) = g(z)] \le L \cdot \left(1 - \frac{k - x}{d}\right)^{r- t}.
    $
    Using $L = \lceil \lambda \ell \rceil \leq (\lambda + 1) \ell$ and
    \(
        \ell = n^{\rho} = \left(1 - \frac{r}{d}\right)^{-\log_{1/p_2} n} \leq \left(1 - \frac{r}{d}\right)^{-k}
    \)
    we obtain
    $
    \E_{q}\left[ |\coll(q, z)| \right] \leq (\lambda + 1) \left(1 - \frac{r}{d}\right)^{-k} \left(1 - \frac{k - x}{d}\right)^{r -t} .
    $
    It remains to show that the r.h.s. is at most $e^2(\lambda + 1)$.
    Because $x$ and $t$ are considerably smaller than $k$ and $r$ respectively, this essentially boils down to show that
    $ \label{eq:outline_core}
    \left(1 - \frac{r}{d}\right)^{-k} \cdot \left(1 - \frac{k}{d}\right)^{r} \leq O(1).
    $
    This is simple to bound using the Taylor expansion of both terms (see \cref{app:sec41} for detailed calculations).
\end{proof}

We now prove the second lemma, which bounds the probability that a point at distance $cr$ will be hashed with the origin.

\begin{lemma} \label{lem:far_coll}
    Let $n > e$, $cr/d \leq 1/5$, $c \geq 1 + \frac{80 + 16 \ln(\lambda + 1)}{\ln n}$, $8e^2(\lambda + 1) \leq n^{1/8}$. Let $G$ be such that no hash functions $g \in G$ has support smaller than $k/2$. Let $q$ be a point at distance at most $r$ from $z$ such that $|\coll(q, z)|$ is at most $2e^2(\lambda + 1)$. Let $q'$ be a point chosen uniformly at random among all of the points at distance $cr$ from $z$ such that for all $i \in [d]$, if $q_i \neq z_i$, then $q'_i \neq z_i$. Then with probability at least $1 - 1/(32 e^2(\lambda + 1))$, $\coll(q', z) = \emptyset$, i.e. no function $g$ hashes $q'$ and $z$ to the same bucket.
\end{lemma}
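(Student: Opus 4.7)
The plan is to split $G = \coll(q, z) \sqcup (G \setminus \coll(q, z))$ and control each part separately. For any $g \in G \setminus \coll(q, z)$, we have $g(q) \neq g(z)$, so there exists some coordinate $i \in \mathrm{supp}(g)$ where $q_i \neq z_i$. The sampling rule for $q'$ forces $q'_i \neq z_i$ at every such coordinate, so $g(q') \neq g(z)$ deterministically. Consequently, only the at most $2 e^2 (\lambda+1)$ functions in $\coll(q, z)$ can contribute to the event $\coll(q', z) \neq \emptyset$.

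For $g \in \coll(q, z)$, the equality $g(q) = g(z)$ forces $\mathrm{supp}(g) \subseteq S := \{i \in [d] : q_i = z_i\}$, and by hypothesis $|\mathrm{supp}(g)| \ge k/2$. The point $q'$ is obtained by choosing a uniformly random subset $T \subseteq S$ of size $m := cr - \dist(q, z)$ and flipping $q$ at the positions in $T$. The event $g(q') = g(z)$ is then equivalent to $T \cap \mathrm{supp}(g) = \emptyset$, and the standard ratio-of-binomials bound gives
$$
  \Pr[g(q') = g(z)] \;=\; \frac{\binom{|S| - |\mathrm{supp}(g)|}{m}}{\binom{|S|}{m}} \;\le\; \left(1 - \frac{|\mathrm{supp}(g)|}{|S|}\right)^{m} \;\le\; \left(1 - \frac{k}{2d}\right)^{(c-1)r},
$$
since $|S| \le d$ and $m \ge cr - r = (c-1)r$. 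Invoking the lower bound $k \ge \tfrac{1}{2}\tfrac{d}{cr} \ln n$ from Lemma~\ref{cor:k_bounds} (whose hypothesis $cr/d \le 1/5$ is included in the present lemma's assumptions), the right-hand side is at most $\exp\!\bigl(-(c-1)rk/(2d)\bigr) \le n^{-(c-1)/(4c)}$.

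A union bound over $\coll(q, z)$ then yields
$$
  \Pr[\coll(q', z) \neq \emptyset] \;\le\; 2 e^2(\lambda + 1) \cdot n^{-(c-1)/(4c)},
$$
and it remains to verify that this quantity is at most $1/(32 e^2 (\lambda + 1))$. I expect this last step to be the technical crux. The natural way to close it is a two-case analysis: when $c \in (1, 2]$, use $(c-1)/(4c) \ge (c-1)/8$ together with the hypothesis $c - 1 \ge (80 + 16 \ln(\lambda+1))/\ln n$ to make the exponent absorb the prefactor; when $c > 2$, use $(c-1)/(4c) \ge 1/8$ together with $8 e^2(\lambda + 1) \le n^{1/8}$ (equivalently $n^{-1/8} \le 1/(8 e^2(\lambda+1))$) to conclude. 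The structural argument — non-colliding functions are automatically safe, and colliding ones survive only if none of the $(c-1)r$ additional random flips lands in their size-$\ge k/2$ support — is straightforward; the obstacle is tracking constants carefully enough that both regimes deliver the stated $1/(32 e^2(\lambda+1))$ bound.
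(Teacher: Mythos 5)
Your structural argument is the same as the paper's: only the functions in $\coll(q,z)$ can possibly collide with $z$ after the outward move (you make explicit, via the support argument, what the paper leaves implicit), each such function collides with probability at most $\bigl(1-\tfrac{k}{2d}\bigr)^{(c-1)r} \le n^{-(c-1)/(4c)}$, and one finishes with a union bound over the at most $2e^2(\lambda+1)$ functions of $\coll(q,z)$ plus a case analysis on $c$. Your derivation of the per-function bound is slightly cleaner than the paper's (an exact hypergeometric count for the uniform $m$-subset of $I$, versus the paper's with-replacement coupling through an intermediate point $\tilde q$), but it is not a genuinely different route.

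The gap is in your $c>2$ branch. To obtain the stated failure probability $1/(32e^2(\lambda+1))$ after the union bound, the per-function collision probability must be at most $(8e^2(\lambda+1))^{-2}$; your $c\le 2$ branch does deliver this, exactly as in the paper, since $(c-1)/8 \ge (10+2\ln(\lambda+1))/\ln n \ge 2\ln(8e^2(\lambda+1))/\ln n$. But for $c>2$, the bound $(c-1)/(4c)\ge 1/8$ combined with $8e^2(\lambda+1)\le n^{1/8}$ only gives $n^{-(c-1)/(4c)} \le n^{-1/8} \le (8e^2(\lambda+1))^{-1}$, one power short, and the union bound then yields only $2e^2(\lambda+1)\cdot(8e^2(\lambda+1))^{-1} = 1/4$ rather than $1/(32e^2(\lambda+1))$. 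Moreover $(c-1)/(4c) < 1/4$ for every finite $c$, so no sharpening of the $1/8$ alone can reach the required $n^{-1/4}$; closing this case needs extra slack from somewhere else, e.g.\ a stronger assumption of the form $8e^2(\lambda+1)\le n^{1/16}$ or sharper bounds on $k$ and the support than $k/2$. For what it is worth, the paper's own proof of this case asserts $n^{-(c-1)/(4c)}\le n^{-1/4}$ for $c\ge 2$, which has the same defect, so the step you flagged as the "technical crux" is precisely the place where the sketch does not close as written.
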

\begin{proof}
    Fix $g \in \coll(q, z)$. 
    First notice that $q'$ defined in lemma statement can be generated as follows. One first selects a uniformly random subset $S$ of $cr - \dist(q, z)$ coordinates in $I = \{ i \in [d]: q_i = z_i \}$. Then one lets $q'_i=q_i\oplus 1$ (negation of $q_i$) for $i\in S$ and $q'_i=q_i$ for $i\in [d]\setminus S$. We denote this distribution by $\mathcal{D}(q)$.  Note that because $g(q) = g(z)$ (since $g\in \coll(q, z)$ by assumption), the support of $g$ is a subset of $I$. 
    We will show that     
    $ 
        \Pr_{q' \sim \mathcal{D}(q)}[g(q') = g(z)] \leq \left(1 - \frac{k}{2d}\right)^{(c-1)r}.
    $
    In order to establish the above, 
    we view the generation of $q'$ as a two step process. First we uniformly sample $cr - \dist(q, z) \geq cr - r$ bits from $I$ {\em with replacement} into a set $B$ (removing duplicates) and flip them in $q$ -- denote the result by $\tilde q$. Next, we choose $cr - \dist(\tilde q, z)$ more bits from $I \setminus B$ and flip them in $\tilde q$ to get $q'$.
    It again holds that $g(\tilde q) \neq g(z)$ implies $g(q') \neq g(z)$, so $\Pr_{q'}(g(q') \neq g(z)) \geq \Pr_{\tilde q}(g(\tilde q) \neq g(z))$.

    The probability that each bit chosen in the first step lies in support of $g$ is at least $\frac{k/2}{d - \dist(q, r)} \geq \frac{k}{2d}$. Hence, the probability that none of them lie in the support, i.e. $\Pr_{\tilde q}(g(\tilde  q) = g(z))$, is at most $(1 - \frac{k}{2d})^{cr - r}$. This establishes the inequality.
    
    By \cref{cor:k_bounds}, $k \geq 0.5 \frac{d}{cr} \ln$.
    By applying \cref{lem:exp_bounds} to $(1 - \frac{k}{2d})^{cr - r}$, we obtain:
    $
    \Pr[g(q') = g(z)]  \le \left(1 - \frac{k}{2d}\right)^{(c-1)r} 
    \leq  n^{-\frac{c-1}{4c}}.
    $
    Now consider two cases. First, if $c \geq 2$, then the rhs of the above inequality satisfies $n^{-\frac{c-1}{4c}} \leq n^{-1/4} \leq \frac{1}{(8e^2(\lambda + 1))^2}$
    since $8e^2(\lambda + 1) \leq n^{1/8}$ by assumption of the lemma. Otherwise, using that $c \geq 1 + \frac{80 + 16 \ln(\lambda + 1)}{\ln n}$ by the assumption of the lemma, we get
    $\frac{c-1}{4c} \geq \frac{c - 1}{8} \geq \frac{10 + 2 \ln(\lambda + 1)}{\ln n} \geq \ln(8e^2(\lambda + 1))/\ln n$.
    Thus, in both cases one has $n^{-\frac{c-1}{4c}} \leq \frac{1}{(8e^2(\lambda + 1))^2}$. Finally, since $|\coll(q,z)| \leq 2e^2 (\lambda + 1)$ by the assumption of the lemma, by union bound across all $g\in \coll(q, z)$, $\Pr[\coll(q', z) \neq \emptyset] \leq 1/(32 e^2(\lambda + 1))$, as required.
\end{proof}

\subsection{Analysis of the algorithm} \label{sec:simple}

Equipped with~\cref{lem:near_coll} and~\cref{lem:far_coll}, we are now ready to analyse~\cref{alg:simple}. Throughout this section we make a strong assumption that the input point $z$ is \textbf{isolated}, that is, the distance from $z$ to any other point in $P$ that is not equal to $z$ is at least $2cr$. This means that as long as we only make queries $q$ at distance at most $cr$ to $z$, the only two possible results of $\lsh$ are $z$ and $\bot$. This in turn means that $\lsh(q) = z$ iff $\coll(q, z) \neq \emptyset$.

First, we analyze the inner loop of the algorithm --- lines~\ref{line:slow_inner_start} to \ref{line:slow_inner_end}. Its goal is to find a bit to flip in $q$ such that the size of $\coll(q,z)$ would decrease afterward. Of course, our algorithm only gets black-box access to \lsh, and cannot test the size of $|\coll(q, z)|$ in general. However, for a point $q'$ one can, using black-box access to \lsh, distinguish between $|\coll(q', z)|=0$ and $|\coll(q', z)|>0$: due to the assumption that $z$ is an isolated point we have $|\coll(q', z)|>0$ iff $\lsh(q', z)\neq \bot$. As we show, this ability suffices. \cref{alg:simple} simply keeps flipping  bits in $q$ at random (but restricted to coordinates where $q$ and $z$ agree) until $z$ is no longer returned by \lsh~ on the modified query: the last flipped bit must lead to a decrease in $|\coll(q, z)|$ even in the original query $q$!  More formally,  our algorithm finds two points $q'$ and $q''$ such that $q'$ lies on a path\footnote{Recall that by a \textbf{path} between $a$ and $b$ we mean a sequence of points starting with $a$ and ending with $b$, where each two consecutive points differ in one bit and where each point is further away from $a$ than its predecessor.} between $q''$ and $q$, and $\coll(q'', z) = \emptyset$ and $\coll(q', z) \neq \emptyset$, and that differ only in one bit. This is enough to conclude that this bit lies in support of all hash functions of $\coll(q',z)$, so by flipping it in $q$ we would remove all of those hash functions from $\coll(q,z)$.

\begin{lemma}[Inner loop of the adversarial walk] \label{lem:simple_inner_loop}
    Let $n > e$, $cr/d \leq 1/5$, $c \geq 1 + \frac{80 + 16 \ln(\lambda + 1)}{\ln n}$, $8e^2(\lambda + 1) \leq n^{1/8}$. Let $G$ be such that no hash functions $g \in G$ has support smaller than $k/2$. Suppose that the point $z$  passed to \cref{alg:simple} is located at distance at least $2cr$ from any other point in $P$. Let $q$ be sampled in line~\ref{line:sample_point} be such that $|\coll(q, z)| \leq 2 e^2 (\lambda + 1)$.
    For a fixed iteration of the outer loop in line~\ref{line:outer_loop} of \cref{alg:simple}, with probability at least \(1 - 1/(32e^2 (\lambda + 1))\) 
    the inner loop between lines~\ref{line:slow_inner_start} and \ref{line:slow_inner_end} finds two points $q'$ and $q''$ such that $\lsh(q') \neq z$, $\lsh(q'') = z$, and $q'$ and $q''$ differ in one bit.
\end{lemma}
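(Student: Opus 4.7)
The plan is to follow the sequence of points $q^{(0)} = q,q^{(1)},q^{(2)},\ldots$ that the variable $q'$ takes on during successive iterations of the inner loop (lines~\ref{line:slow_inner_start}--\ref{line:slow_inner_end}), and to argue that with the stated probability this sequence triggers the exit condition $\lsh(q')\neq z$ before the distance budget $cr$ is exceeded. The isolation of $z$ (distance $\ge 2cr$ to every other point of $P$) is what lets the algorithm read $|\coll(\cdot,z)|$ through \lsh: for every query $\tilde q$ with $\dist(\tilde q,z)\le cr$ we have $\lsh(\tilde q)=z$ iff $\coll(\tilde q,z)\neq\emptyset$, and $\lsh(\tilde q)=\bot$ otherwise. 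So the inner loop effectively runs until $\coll(q^{(i)},z)=\emptyset$ (or the distance budget is exhausted).

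First I would observe that each iteration of the inner loop flips one coordinate chosen uniformly from $I=\{i\in[d]:q^{(i-1)}_i=z_i\}$, producing a walk that moves monotonically away from $z$ and never flips a coordinate on which $q$ already disagreed with $z$. Therefore, conditioned on the loop surviving without early exit for $m:=cr-\dist(q,z)$ steps, the endpoint $q^{(m)}$ is uniformly distributed over the set of points at distance exactly $cr$ from $z$ whose disagreement set with $z$ contains the disagreement set of $q$ with $z$. This is precisely the distribution $\mathcal{D}(q)$ appearing in \cref{lem:far_coll}: sampling $m$ coordinates one at a time without replacement from the allowed set yields the same distribution as choosing a uniformly random size-$m$ subset of that set in one shot.

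Then the key step is to apply \cref{lem:far_coll} to $q^{(m)}$. The hypotheses on $n$, $c$, $r/d$, $\lambda$, the support of every $g\in G$, and the bound $|\coll(q,z)|\le 2e^2(\lambda+1)$ are inherited verbatim from the hypotheses of \cref{lem:simple_inner_loop}, so we obtain $\Pr[\coll(q^{(m)},z)=\emptyset]\ge 1-1/(32e^2(\lambda+1))$. In this good event there is a smallest index $i^{\star}\le m$ with $\coll(q^{(i^{\star})},z)=\emptyset$; since $\coll(q^{(0)},z)\ne\emptyset$ by the assumption $|\coll(q,z)|\le 2e^2(\lambda+1)$ combined with the fact that the outer loop was entered (so $\lsh(q)=z$, hence $|\coll(q,z)|\ge 1$), we have $i^{\star}\ge 1$. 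Setting $q''=q^{(i^{\star}-1)}$ and $q'=q^{(i^{\star})}$ gives two points differing in exactly one bit with $\lsh(q'')=z$ and $\lsh(q')=\bot\ne z$, which is what the lemma claims.

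The only subtle point, and the main thing worth writing out carefully, is the coupling between the one-coordinate-at-a-time process inside the loop and the single-shot uniform sample used in \cref{lem:far_coll}; once this identification is made the failure probability transfers directly, and no fresh analytic work is needed beyond what was already done in \cref{lem:far_coll}. A minor bookkeeping remark is that the loop may exit at some $i^\star<m$, in which case the desired pair is found strictly before the distance-$cr$ threshold is reached; this only helps, and is already absorbed in the argument above by taking the smallest such $i^\star$.
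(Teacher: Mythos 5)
Your proposal is correct and takes essentially the same route as the paper: view the inner loop as a random walk that, if run for all $cr-\dist(q,z)$ steps, produces a point distributed exactly as $\mathcal{D}(q)$ in \cref{lem:far_coll}, and transfer that lemma's $1-1/(32e^2(\lambda+1))$ guarantee to conclude the walk must cross from a colliding point to a non-colliding one at some first step. The only wording to fix is ``conditioned on the loop surviving without early exit'' --- conditioning would skew the endpoint's distribution; the right statement (which your coupling remark and subsequent argument actually use) is that the \emph{un-stopped} $m$-step walk has endpoint distributed as $\mathcal{D}(q)$, and the stopped algorithm exits at the first index where $\coll(\cdot,z)$ becomes empty whenever that good event occurs.
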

\begin{proof}
    Fix an iteration of the outer loop. Then the inner loop at line~\ref{line:inner_loop} essentially does a random walk away from $z$, always increasing the distance from $z$ after every flip. There are two possibilities: either it finds a point at distance at most $cr$ from $z$ such that $\lsh(q) \neq z$, i.e. $\coll(q', z) = \emptyset$, or it reaches a point $q''$ at distance $cr$ such that $\lsh(q'') = z$, where the algorithm would return $\bot$.
    
    Assume that the latter happens. Since the bits flipped are chosen at random, this process is equivalent to flipping $cr - \dist(q, z)$ different bits of $q$ where $q$ is not equal to $z$. This is the same process as in statement of \cref{lem:far_coll}, so by this lemma the probability that this case takes place is at most \(1/(32e^2 (\lambda + 1))\).
\end{proof}

We are now ready to analyze \cref{alg:simple}. 
Here we present only the conceptual part of the proof of \cref{thm:simple}. 
The full proof can be found in \cref{app:simple_analysis}

\begin{restatable}{theorem}{combined} \label{thm:combined}
Let $n>e$, $28 \ln^3 n \leq r \leq d/(28 \ln n)$, $\lambda \leq \frac{1}{8e^2} \min\{\frac{r}{\ln n}, n^{1/8}\} - 1$, and $1 + \frac{80 + 16 \ln(\lambda + 1)}{\ln n} \leq c \leq \ln n$.
%
Let $t = 2e^2 (\lambda + 1)$ be the parameter from \cref{alg:simple} (\cref{alg:fast}). Suppose that the point $z$  passed to \cref{alg:simple} 
(\cref{alg:fast})
 is located at a distance of at least $2cr$ from any other point in $P$. With probability at least $1/4 - 1/n$ the algorithm finds a point $q$ at a distance of at most $r$ from $z$ such that querying LSH with $q$ returns no point.
It uses at most $O(cr\cdot \lambda)$ ($O(\log (cr) \cdot \lambda)$ for 
\cref{alg:fast}) queries to the LSH data structure.
\end{restatable}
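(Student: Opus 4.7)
The plan is to combine \cref{lem:supp_lb}, \cref{lem:near_coll}, and \cref{lem:simple_inner_loop} via a union bound, after verifying that the parameter constraints of the theorem imply the hypotheses of each lemma. First I would invoke \cref{lem:supp_lb} to assert that with probability at least $1 - 1/n$ every $g \in G$ has support of size at least $k - 7\ln n \cdot \max\{1, k^2/(2d)\}$, and then use \cref{cor:k_bounds} together with the assumptions $28\ln^3 n \le r \le d/(28\ln n)$ and $c \le \ln n$ to conclude that this lower bound is at least $k/2$. This strengthened bound is exactly the support condition that \cref{lem:far_coll}, and hence \cref{lem:simple_inner_loop}, requires.

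Conditioning on this support event, I would then apply \cref{lem:near_coll} to the point $q$ sampled in line \ref{line:sample_point}, yielding $\E[|\coll(q,z)|] \le e^2(\lambda+1)$. By Markov's inequality, with probability at least $1/2$ we have $|\coll(q,z)| \le 2e^2(\lambda+1) = t$. This is both the initial condition needed by \cref{lem:simple_inner_loop} and the invariant that bounds the outer loop length: every completed iteration flips a bit that lies in the support of every $g \in \coll(q',z)$ just before the flip (by the algorithm outline argument), so $|\coll(q,z)|$ strictly decreases and the loop on line \ref{line:outer_loop} runs at most $t$ times.

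For each outer iteration, \cref{lem:simple_inner_loop} guarantees that the inner loop correctly returns a pair $q',q''$ with $\lsh(q') \neq z$, $\lsh(q'') = z$, and Hamming distance $1$, with probability at least $1 - 1/(32e^2(\lambda+1))$. Union bounding over all $t = 2e^2(\lambda+1)$ iterations, no inner loop fails with probability at least $1 - 1/16$. Chaining the three events gives overall success probability at least $(1 - 1/n)\cdot(1/2)\cdot(15/16) \ge 1/4 - 1/n$, as claimed. For query complexity, each outer iteration performs at most $cr$ LSH queries in its inner loop (which terminates once $\dist(q',z) > cr$), yielding $O(cr \cdot t) = O(cr \cdot \lambda)$ queries for \cref{alg:simple}; for \cref{alg:fast} the per-iteration cost drops to $O(\log(cr))$ by replacing the linear scan with a binary search on the number of bit flips.

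I expect the main obstacle to be the case analysis proving support $\ge k/2$: one must handle $k^2/(2d) \le 1$ (using $k \ge 0.5\,\frac{d}{cr}\ln n$ together with $c \le \ln n$ to obtain $k \ge 14\ln n$) and $k^2/(2d) > 1$ (using $k \le 2\,\frac{d}{cr}\ln n$ together with $r \ge 28\ln^3 n$ to obtain $k \le d/(14\ln n)$) separately, and similarly verify that the other numerical hypotheses of \cref{lem:near_coll} and \cref{lem:far_coll}, notably $8e^2(\lambda+1)\ln n \le cr$, $r \le d/(5c)$, $cr/d \le 1/5$, and $8e^2(\lambda+1) \le n^{1/8}$, all follow from the theorem's parameter regime. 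These checks are routine but require tracking constants carefully.
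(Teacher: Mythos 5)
Your proposal is correct and follows essentially the same route as the paper's proof: verifying the parameter conditions, combining \cref{lem:supp_lb} with \cref{cor:k_bounds} to get the $k/2$ support bound, applying \cref{lem:near_coll} with Markov to bound the initial $|\coll(q,z)|$, and then using \cref{lem:simple_inner_loop} (and its binary-search analogue, \cref{lem:fast_inner_loop}) with the strict-decrease potential argument and a union bound over the at most $t$ outer iterations. The only cosmetic difference is that you chain conditional probabilities multiplicatively where the paper uses a single additive union bound; both yield the claimed $1/4 - 1/n$.
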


\begin{proof}[Proof outline]
We start with \cref{alg:simple}.
Let $q^{(j)}$ be the value of $q$ at the end of the iteration $j$ inner loop of \cref{lem:simple_inner_loop}, and let $q^{(0)}$ be its value at the beginning of the first iteration of this loop. Let $j^*$ be the number of iterations.

For each iteration $j$ of the inner loop, by \cref{lem:simple_inner_loop} two points $q'$ and $q''$ are found. The bit where they differ must lie in the support of one of the hash functions in $\coll(q'', z)$ since $\lsh(q') \neq z$. Since $q^{(j)}$ and $q^{(j-1)}$ differ in only this bit, and because $\coll(q', z) \subseteq \coll(q^{(j-1)}, z)$, it holds that $|\coll(q^{(j)}, z)| < |\coll(q^{(j-1)}, z)|$, so the size of the $|\coll(q^{(j)}, z)|$ decreases by at least $1$ after each execution of the loop.

By \cref{lem:near_coll} and Markov's inequality, $|\coll(q^{(0)}, z)|$ is at most $2e^2(\lambda + 1)$ with constant probability, so at most $2e^2 (\lambda + 1)$ iterations will be made until $\coll(q^{(j^*)}, z) = \emptyset$, which also means that $\lsh(q^{(j^*)}) = \bot$ so the loop stops. On the other hand, $\dist(q^{(j^*)}, z) \leq r$, which makes $q^{(j^*)}$ a false negative query. 


With a simple modification of the inner loop of \cref{alg:simple}, we can exponentially reduce the dependence on $cr$ in the running time and the number of queries. 
%
We state the algorithm and give the proof of its correctness in \cref{sec:fast_analysis}. At a high level, instead of using a linear search as in \cref{alg:simple}, to find a pair of points $q''$, $q'$ such that $\coll(q'', z) \neq \emptyset$ and $\coll(q', z) = \emptyset$, we use a binary search in the inner loop. We still start with the same point $q$. However, we now flip the $cr - \dist(q,z)$ bits in $q$ in positions where $q$ equals $z$ to get $q^{right}$. Then by \cref{lem:far_coll}, $q^{right}$ does not hash with $z$. Now we use a binary search on the path from $q$ to $q^{right}$ to find the aforementioned pair of points. The analysis is analogous to that of \cref{lem:simple_inner_loop}.
\end{proof}

\section{Discussion of potential defenses against our attacking scheme}

As was mentioned in \cref{sec:related}, there already exists large literature on the robustification of LSH, many of which work against our adversary. In particular, we highlight \cite{Ahle17}, which provides a false-negative-free Las-Vegas type data structure with query time $dn^{1/c + o(1)}$ and used space $dn^{1 + 1/c + o(1)}$, achieving the same asymptotic performance as \cite{IndykM98}, as well as \cite{CherapanamjeriN20}, which can be applied to LSH to guarantee perfect robustness by paying $O(d)$ times more space.

However, both approaches are not perfect within the same parameter bounds, as approach of \cite{Ahle17} doesn't give any guarantees on query time in this setting, and the space usage of \cite{CherapanamjeriN20} is too prohibitive in settings with large $d$. We experimentally investigate how the latter, and a construction based on differential privacy, fares against our adaptive adversary when the available space is limited in Experiment 6 in \cref{app:exp}. Interestingly, these defences perform well in this setting, which suggests that they might be reasonably effective even against a general adversary.
\section{Experiments}
\label{sec:experim}
The goal of this section is to verify the efficacy of our adversary on real datasets, where an isolated point does not necessarily exist, and for parameter configurations outside of the guarantees of \cref{thm:simple,thm:fast}.

\begin{figure*}[!t]
    \centering
    \includegraphics[width=\textwidth]{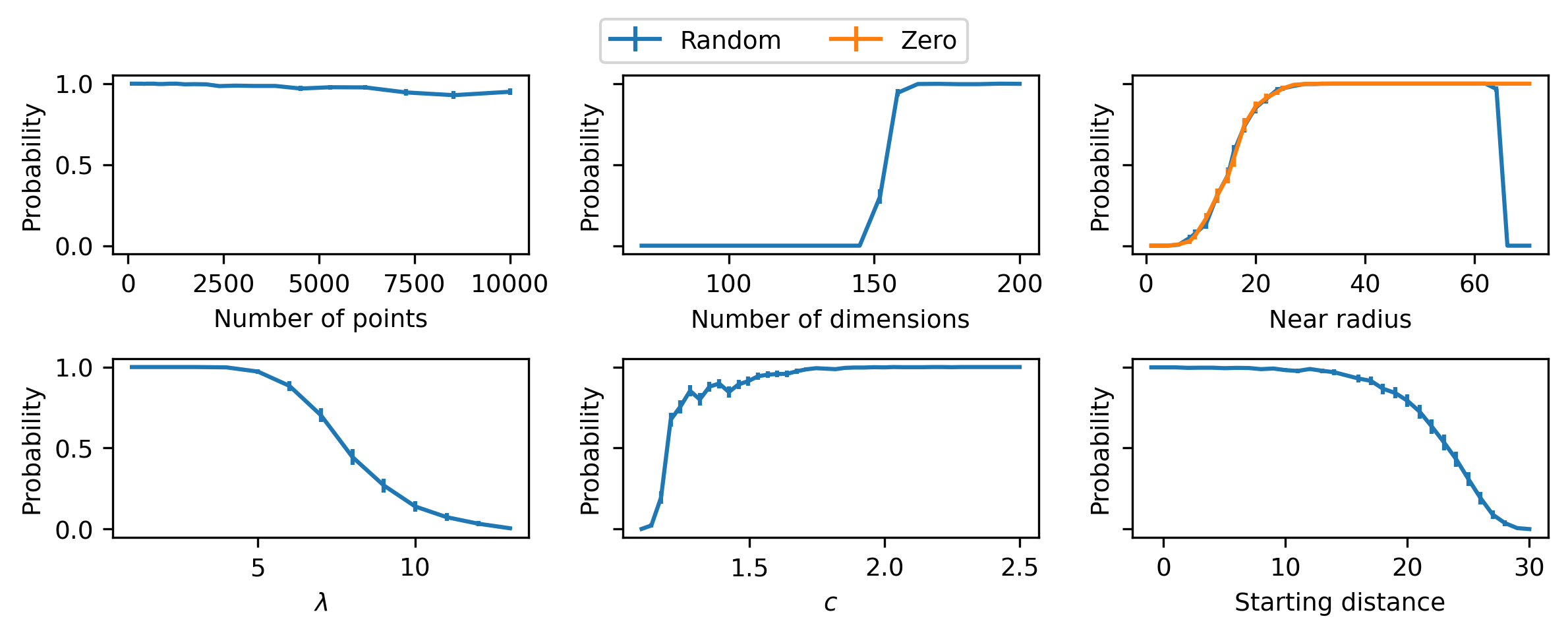}
    \caption{Dependence of success probability on various parameters. All experiments are done on the Random dataset, with the third one also featuring the Zero dataset.}
    \label{fig:param-variation}
\end{figure*}

\begin{figure*}[t!]
    \centering
    \includegraphics[width=\textwidth]{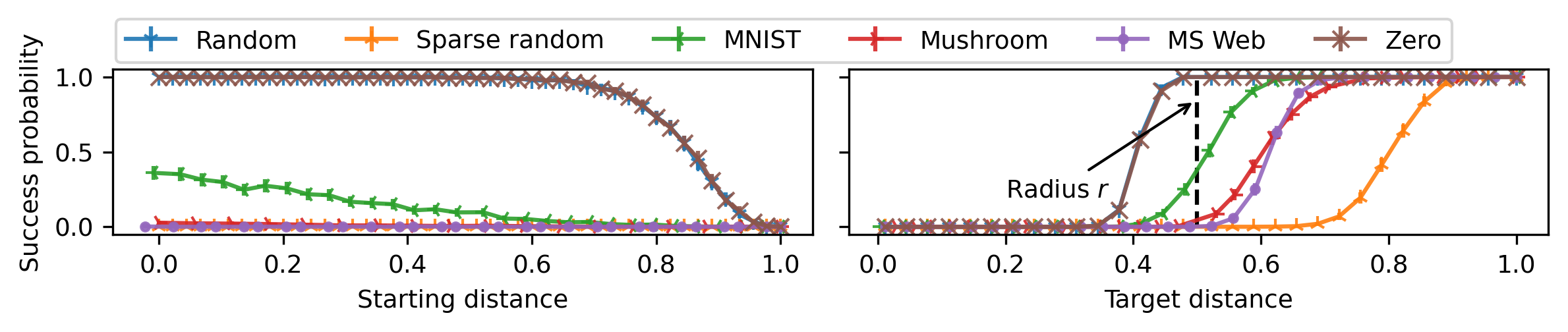}
    \caption{Dependence of the success probability on the value of $t$ and on the value of desired distance of false negative query from the origin. }
    \label{fig:all-data}
\end{figure*}

\textbf{Implementation details.}
Our implementation of LSH follows \cref{alg:lsh_query} in \cref{sec:prelims} \cite{IndykM98}. To summarise, given a query point $q$, we iterate through all points in the dataset that hash together with $q$ under some hash function, then return the first one which lies at distance at most $cr$ from $q$, and return nothing if there is no such point.
The adversary is implemented according to \cref{alg:fast}.

\textbf{Experiment 1: Influence of parameters on success probability.}
The goal of this experiment is to explore within which parameter setting the algorithm works. 
For this experiment we use synthetic datasets, since for them we can easily vary the number of points and dimensions. We use two datasets:

\textit{Zero}. This dataset entirely consists of $n$ copies of the all-zero point. This dataset is the easiest for our algorithm to find a false negative in, as it is equivalent to a dataset containing only a single point, which would automatically be isolated.

\textit{Random}. Each point is sampled i.i.d. uniformly, with each bit having a $1/2$ probability to be set to $1$.

For each point on the plots, the experiments were run $1000$ times, and the result is the mean of the associated values.
Plots contain error bars computed as standard error of the mean, but sometimes they are too small to be visible.

The parameters that we vary are $n$, $d$, $r$, $\lambda$, $c$ and $t$. Each parameter is changed individually. The default setting of parameters is $n=1000$, $d=300$, $\lambda = 4$, $r=30$, $c=2$.
The origin point $z$ is picked at random.
The offset $t$ is set to $r$, or, in other words, the initial point $q$ is initialized to be equal to $z$. This is different than what was used in \cref{alg:simple,alg:fast}. However, the intuition is that if instead of setting $q$ to be a random point at distance $2e^2(\lambda + 1)$ from $z$ we set $q=z$ and then move it away from $z$ in a way that ensures that the size of $\coll(q, z)$ decreases for every changed bit, the probability that the algorithm will find a false negative query would only increase. This intuition ends up being supported by the experiment results.

The results are presented in \cref{fig:param-variation}.  
The sixth subplot, counting left-to-right and top-to-bottom, supports the claim that we just made: the lower the starting distance of the point $q$, which is equal to $r - t$, the higher the success probability of the algorithm.
The first and fifth subplots are straightforward. The number of points $n$ has a very weak effect on the success probability. Parameter $c$ also has a weak influence if it is large enough, and when it is small it is unlikely that a point at distance $cr$ does not hash with $z$, see \cref{lem:far_coll}. 
The drop in probability on the left-hand side of the third subplot where the near radius $r$ is varied and the right-hand side of the fourth where $\lambda$ is varied can be explained by the fact that a random point at distance $r$ has in expectation $\Omega(\lambda)$ hash functions hashing it with $z$. Hence, intuitively, for it to be possible to eliminate all collisions with $z$, it must be that $t \geq \Omega(\lambda)$, which also implies that $r \geq \Omega(\lambda)$ is necessary.

\textbf{Experiment 2: Efficiency across datasets.}
In this experiment we measure how well our algorithm performs on a variety of different datasets. We use the datasets from Experiment 1, as well as one additional synthetic dataset and 3 datasets with real data.
The new synthetic dataset is Sparse random. Each point there is generated by independently sampling each bit to be $1$ with probability $1/15$, and to be $0$ otherwise.
The remaining datasets are:
 We use embeddings of the data sets \textit{MNIST} \cite{lecun1998mnist}, \textit{Anonymous Microsoft Web Data }(MSWeb) \cite{misc_anonymous_microsoft_web_data_4} and \textit{Mushroom} \cite{misc_mushroom_73,wulff1982audubon} 
 into Hamming space. For MNIST we flatten the images into vectors whose Boolean variables indicate the presence of a pixel. For MSWeb  and Mushroom we use one-hot encoding to map the data into Hamming space. A more detailed description can be found in the \cref{app:exp}. 

Due to computational constraints, we choose to limit the size of each dataset to at most $10000$. For MNIST and MSWeb we only use the first $10000$ points. We use all $8124$ points in Mushroom. For the synthetic datasets we generate $10000$ points with the dimension $300$, which is about the same order of magnitude as the dimensions of real datasets.
We choose a setting of parameters with which our algorithm has a good chance of success as per the results of Experiment~1. As the number of dimensions varies depending on the dataset, we set $r = 0.15 d$. $c = 2$, $\lambda = 4$ and the starting distance is $0$ on the second subplot.
Otherwise, we use the same setup as Experiment 1, except that for fixed datasets the origin point is the first one in the dataset.
On the first subplot the starting distance is measured as a fraction of $r$.
On the second subplot the goal of the algorithm is to find a query to the LSH data structure that returns $\bot$ at a distance from the origin specified by the $x$ axis as a fraction of the $cr$. The dashed black vertical line represents the radius $r$ and is located at $0.5$ since $c=2$.
%
%
The results in the first subplot of \cref{fig:all-data} show that success probability drops monotonically for all datasets as the starting distance grows. This is natural to expect given that similar behaviour was observed in Experiment 1. 
%
On the second subplot we can see that for the most datasets we can reliably find a negative query only at distance much larger than $r$. This, in a way, allows us to compare the ``hardness'' of datasets, as further away from the origin we move, the less functions collide the origin point $z$ and our current point, the less bits we need to change for the query to become negative.
As one can see, all of the real datasets lie between the Random and Sparse random on the second subplot. This 
can be explained by the fact that the points in real datasets are much sparser than the Random, and the maximum distance from the most isolated point to the nearest point is smaller than the ideal $2cr$.

\bibliographystyle{plain}
\bibliography{main}

\begin{thebibliography}{10}

\bibitem{Ahle17}
Thomas~Dybdahl Ahle.
\newblock Optimal las vegas locality sensitive data structures.
\newblock In Chris Umans, editor, {\em 58th {IEEE} Annual Symposium on Foundations of Computer Science, {FOCS} 2017, Berkeley, CA, USA, October 15-17, 2017}, pages 938--949. {IEEE} Computer Society, 2017.

\bibitem{ABVT20}
Panagiotis Anagnostou, Petros Barbas, Aristidis~G. Vrahatis, and Sotiris~K. Tasoulis.
\newblock Approximate knn classification for biomedical data.
\newblock In {\em 2020 IEEE International Conference on Big Data (Big Data)}, pages 3602--3607, 2020.

\bibitem{AndoniI06}
Alexandr Andoni and Piotr Indyk.
\newblock Near-optimal hashing algorithms for approximate nearest neighbor in high dimensions.
\newblock In {\em 47th Annual {IEEE} Symposium on Foundations of Computer Science {(FOCS} 2006), 21-24 October 2006, Berkeley, California, USA, Proceedings}, pages 459--468. {IEEE} Computer Society, 2006.

\bibitem{AndoniILRS15}
Alexandr Andoni, Piotr Indyk, Thijs Laarhoven, Ilya~P. Razenshteyn, and Ludwig Schmidt.
\newblock Practical and optimal {LSH} for angular distance.
\newblock In Corinna Cortes, Neil~D. Lawrence, Daniel~D. Lee, Masashi Sugiyama, and Roman Garnett, editors, {\em Advances in Neural Information Processing Systems 28: Annual Conference on Neural Information Processing Systems 2015, December 7-12, 2015, Montreal, Quebec, Canada}, pages 1225--1233, 2015.

\bibitem{AndoniINR14}
Alexandr Andoni, Piotr Indyk, Huy~L. Nguyen, and Ilya~P. Razenshteyn.
\newblock Beyond locality-sensitive hashing.
\newblock In Chandra Chekuri, editor, {\em Proceedings of the Twenty-Fifth Annual {ACM-SIAM} Symposium on Discrete Algorithms, {SODA} 2014, Portland, Oregon, USA, January 5-7, 2014}, pages 1018--1028. {SIAM}, 2014.

\bibitem{AndoniLRW17}
Alexandr Andoni, Thijs Laarhoven, Ilya~P. Razenshteyn, and Erik Waingarten.
\newblock Optimal hashing-based time-space trade-offs for approximate near neighbors.
\newblock In Philip~N. Klein, editor, {\em Proceedings of the Twenty-Eighth Annual {ACM-SIAM} Symposium on Discrete Algorithms, {SODA} 2017, Barcelona, Spain, Hotel Porta Fira, January 16-19}, pages 47--66. {SIAM}, 2017.

\bibitem{AndoniR15}
Alexandr Andoni and Ilya~P. Razenshteyn.
\newblock Optimal data-dependent hashing for approximate near neighbors.
\newblock In Rocco~A. Servedio and Ronitt Rubinfeld, editors, {\em Proceedings of the Forty-Seventh Annual {ACM} on Symposium on Theory of Computing, {STOC} 2015, Portland, OR, USA, June 14-17, 2015}, pages 793--801. {ACM}, 2015.

\bibitem{AttiasCSS23}
Idan Attias, Edith Cohen, Moshe Shechner, and Uri Stemmer.
\newblock A framework for adversarial streaming via differential privacy and difference estimators.
\newblock In Yael~Tauman Kalai, editor, {\em 14th Innovations in Theoretical Computer Science Conference, {ITCS} 2023, January 10-13, 2023, MIT, Cambridge, Massachusetts, {USA}}, volume 251 of {\em LIPIcs}, pages 8:1--8:19. Schloss Dagstuhl - Leibniz-Zentrum f{\"{u}}r Informatik, 2023.

\bibitem{0001CP018}
Martin Aum{\"{u}}ller, Tobias Christiani, Rasmus Pagh, and Francesco Silvestri.
\newblock Distance-sensitive hashing.
\newblock In Jan~Van den Bussche and Marcelo Arenas, editors, {\em Proceedings of the 37th {ACM} {SIGMOD-SIGACT-SIGAI} Symposium on Principles of Database Systems, Houston, TX, USA, June 10-15, 2018}, pages 89--104. {ACM}, 2018.

\bibitem{BeimelKMNSS22}
Amos Beimel, Haim Kaplan, Yishay Mansour, Kobbi Nissim, Thatchaphol Saranurak, and Uri Stemmer.
\newblock Dynamic algorithms against an adaptive adversary: generic constructions and lower bounds.
\newblock In Stefano Leonardi and Anupam Gupta, editors, {\em {STOC} '22: 54th Annual {ACM} {SIGACT} Symposium on Theory of Computing, Rome, Italy, June 20 - 24, 2022}, pages 1671--1684. {ACM}, 2022.

\bibitem{Ben-EliezerJWY22}
Omri Ben{-}Eliezer, Rajesh Jayaram, David~P. Woodruff, and Eylon Yogev.
\newblock A framework for adversarially robust streaming algorithms.
\newblock {\em J. {ACM}}, 69(2):17:1--17:33, 2022.

\bibitem{misc_anonymous_microsoft_web_data_4}
Jack Breese, David Heckerman, and Carl Kadie.
\newblock {Anonymous Microsoft Web Data}.
\newblock UCI Machine Learning Repository, 1998.
\newblock {DOI}: https://doi.org/10.24432/C5VS3Q.

\bibitem{Charikar02}
Moses Charikar.
\newblock Similarity estimation techniques from rounding algorithms.
\newblock In John~H. Reif, editor, {\em Proceedings on 34th Annual {ACM} Symposium on Theory of Computing, May 19-21, 2002, Montr{\'{e}}al, Qu{\'{e}}bec, Canada}, pages 380--388. {ACM}, 2002.

\bibitem{CherapanamjeriN20}
Yeshwanth Cherapanamjeri and Jelani Nelson.
\newblock On adaptive distance estimation.
\newblock In Hugo Larochelle, Marc'Aurelio Ranzato, Raia Hadsell, Maria{-}Florina Balcan, and Hsuan{-}Tien Lin, editors, {\em Advances in Neural Information Processing Systems 33: Annual Conference on Neural Information Processing Systems 2020, NeurIPS 2020, December 6-12, 2020, virtual}, 2020.

\bibitem{Christiani17}
Tobias Christiani.
\newblock A framework for similarity search with space-time tradeoffs using locality-sensitive filtering.
\newblock In Philip~N. Klein, editor, {\em Proceedings of the Twenty-Eighth Annual {ACM-SIAM} Symposium on Discrete Algorithms, {SODA} 2017, Barcelona, Spain, Hotel Porta Fira, January 16-19}, pages 31--46. {SIAM}, 2017.

\bibitem{Cohen0NSSS22}
Edith Cohen, Xin Lyu, Jelani Nelson, Tam{\'{a}}s Sarl{\'{o}}s, Moshe Shechner, and Uri Stemmer.
\newblock On the robustness of countsketch to adaptive inputs.
\newblock In Kamalika Chaudhuri, Stefanie Jegelka, Le~Song, Csaba Szepesv{\'{a}}ri, Gang Niu, and Sivan Sabato, editors, {\em International Conference on Machine Learning, {ICML} 2022, 17-23 July 2022, Baltimore, Maryland, {USA}}, volume 162 of {\em Proceedings of Machine Learning Research}, pages 4112--4140. {PMLR}, 2022.

\bibitem{CohenNSS23}
Edith Cohen, Jelani Nelson, Tam{\'{a}}s Sarl{\'{o}}s, and Uri Stemmer.
\newblock Tricking the hashing trick: {A} tight lower bound on the robustness of countsketch to adaptive inputs.
\newblock In Brian Williams, Yiling Chen, and Jennifer Neville, editors, {\em Thirty-Seventh {AAAI} Conference on Artificial Intelligence, {AAAI} 2023, Thirty-Fifth Conference on Innovative Applications of Artificial Intelligence, {IAAI} 2023, Thirteenth Symposium on Educational Advances in Artificial Intelligence, {EAAI} 2023, Washington, DC, USA, February 7-14, 2023}, pages 7235--7243. {AAAI} Press, 2023.

\bibitem{DahlgaardKT17}
S{\o}ren Dahlgaard, Mathias B{\ae}k~Tejs Knudsen, and Mikkel Thorup.
\newblock Practical hash functions for similarity estimation and dimensionality reduction.
\newblock In Isabelle Guyon, Ulrike von Luxburg, Samy Bengio, Hanna~M. Wallach, Rob Fergus, S.~V.~N. Vishwanathan, and Roman Garnett, editors, {\em Advances in Neural Information Processing Systems 30: Annual Conference on Neural Information Processing Systems 2017, December 4-9, 2017, Long Beach, CA, {USA}}, pages 6615--6625, 2017.

\bibitem{datar2004locality}
Mayur Datar, Nicole Immorlica, Piotr Indyk, and Vahab~S Mirrokni.
\newblock Locality-sensitive hashing scheme based on p-stable distributions.
\newblock In {\em Proceedings of the twentieth annual symposium on Computational geometry}, pages 253--262, 2004.

\bibitem{HardtW13}
Moritz Hardt and David~P. Woodruff.
\newblock How robust are linear sketches to adaptive inputs?
\newblock In Dan Boneh, Tim Roughgarden, and Joan Feigenbaum, editors, {\em Symposium on Theory of Computing Conference, STOC'13, Palo Alto, CA, USA, June 1-4, 2013}, pages 121--130. {ACM}, 2013.

\bibitem{HassidimKMMS22}
Avinatan Hassidim, Haim Kaplan, Yishay Mansour, Yossi Matias, and Uri Stemmer.
\newblock Adversarially robust streaming algorithms via differential privacy.
\newblock {\em J. {ACM}}, 69(6):42:1--42:14, 2022.

\bibitem{IndykM98}
Piotr Indyk and Rajeev Motwani.
\newblock Approximate nearest neighbors: Towards removing the curse of dimensionality.
\newblock In Jeffrey~Scott Vitter, editor, {\em Proceedings of the Thirtieth Annual {ACM} Symposium on the Theory of Computing, Dallas, Texas, USA, May 23-26, 1998}, pages 604--613. {ACM}, 1998.

\bibitem{JMNIC21}
Omid Jafari, Preeti Maurya, Parth Nagarkar, Khandker~Mushfiqul Islam, and Chidambaram Crushev.
\newblock A survey on locality sensitive hashing algorithms and their applications.
\newblock {\em arXiv preprint arXiv:2102.08942}, 2021.

\bibitem{Kapralov15}
Michael Kapralov.
\newblock Smooth tradeoffs between insert and query complexity in nearest neighbor search.
\newblock In Tova Milo and Diego Calvanese, editors, {\em Proceedings of the 34th {ACM} Symposium on Principles of Database Systems, {PODS} 2015, Melbourne, Victoria, Australia, May 31 - June 4, 2015}, pages 329--342. {ACM}, 2015.

\bibitem{KennedyW17}
Christopher Kennedy and Rachel~A. Ward.
\newblock Fast cross-polytope locality-sensitive hashing.
\newblock In Christos~H. Papadimitriou, editor, {\em 8th Innovations in Theoretical Computer Science Conference, {ITCS} 2017, January 9-11, 2017, Berkeley, CA, {USA}}, volume~67 of {\em LIPIcs}, pages 53:1--53:16. Schloss Dagstuhl - Leibniz-Zentrum f{\"{u}}r Informatik, 2017.

\bibitem{KIW07}
Hisashi Koga, Tetsuo Ishibashi, and Toshinori Watanabe.
\newblock Fast agglomerative hierarchical clustering algorithm using locality-sensitive hashing.
\newblock {\em Knowledge and Information Systems}, 12:25--53, 2007.

\bibitem{KushilevitzOR00}
Eyal Kushilevitz, Rafail Ostrovsky, and Yuval Rabani.
\newblock Efficient search for approximate nearest neighbor in high dimensional spaces.
\newblock {\em {SIAM} J. Comput.}, 30(2):457--474, 2000.

\bibitem{lecun1998mnist}
Yann LeCun.
\newblock The mnist database of handwritten digits.
\newblock {\em http://yann. lecun. com/exdb/mnist/}, 1998.

\bibitem{Linips12}
Ping Li, Art~B. Owen, and Cun{-}Hui Zhang.
\newblock One permutation hashing.
\newblock In Peter~L. Bartlett, Fernando C.~N. Pereira, Christopher J.~C. Burges, L{\'{e}}on Bottou, and Kilian~Q. Weinberger, editors, {\em Advances in Neural Information Processing Systems 25: 26th Annual Conference on Neural Information Processing Systems 2012. Proceedings of a meeting held December 3-6, 2012, Lake Tahoe, Nevada, United States}, pages 3122--3130, 2012.

\bibitem{NelsonNW12}
Jelani Nelson, Huy~L. Nguy{\^{e}}n, and David~P. Woodruff.
\newblock On deterministic sketching and streaming for sparse recovery and norm estimation.
\newblock In Anupam Gupta, Klaus Jansen, Jos{\'{e}} D.~P. Rolim, and Rocco~A. Servedio, editors, {\em Approximation, Randomization, and Combinatorial Optimization. Algorithms and Techniques - 15th International Workshop, {APPROX} 2012, and 16th International Workshop, {RANDOM} 2012, Cambridge, MA, USA, August 15-17, 2012. Proceedings}, volume 7408 of {\em Lecture Notes in Computer Science}, pages 627--638. Springer, 2012.

\bibitem{Pagh16}
Rasmus Pagh.
\newblock Locality-sensitive hashing without false negatives.
\newblock In Robert Krauthgamer, editor, {\em Proceedings of the Twenty-Seventh Annual {ACM-SIAM} Symposium on Discrete Algorithms, {SODA} 2016, Arlington, VA, USA, January 10-12, 2016}, pages 1--9. {SIAM}, 2016.

\bibitem{Panigrahy06}
Rina Panigrahy.
\newblock Entropy based nearest neighbor search in high dimensions.
\newblock In {\em Proceedings of the Seventeenth Annual {ACM-SIAM} Symposium on Discrete Algorithms, {SODA} 2006, Miami, Florida, USA, January 22-26, 2006}, pages 1186--1195. {ACM} Press, 2006.

\bibitem{RK08}
Matti Ryynanen and Anssi Klapuri.
\newblock Query by humming of midi and audio using locality sensitive hashing.
\newblock In {\em 2008 IEEE International Conference on Acoustics, Speech and Signal Processing}, pages 2249--2252, 2008.

\bibitem{misc_mushroom_73}
{Mushroom}.
\newblock UCI Machine Learning Repository, 1987.
\newblock {DOI}: https://doi.org/10.24432/C5959T.

\bibitem{WeberSB98}
Roger Weber, Hans{-}J{\"{o}}rg Schek, and Stephen Blott.
\newblock A quantitative analysis and performance study for similarity-search methods in high-dimensional spaces.
\newblock In Ashish Gupta, Oded Shmueli, and Jennifer Widom, editors, {\em VLDB'98, Proceedings of 24rd International Conference on Very Large Data Bases, August 24-27, 1998, New York City, New York, {USA}}, pages 194--205. Morgan Kaufmann, 1998.

\bibitem{Wei22}
Alexander Wei.
\newblock Optimal las vegas approximate near neighbors in \emph{{\(\ell\)}\({}_{\mbox{p}}\)}.
\newblock {\em {ACM} Trans. Algorithms}, 18(1):7:1--7:27, 2022.

\bibitem{WoodruffZ21}
David~P. Woodruff and Samson Zhou.
\newblock Tight bounds for adversarially robust streams and sliding windows via difference estimators.
\newblock In {\em 62nd {IEEE} Annual Symposium on Foundations of Computer Science, {FOCS} 2021, Denver, CO, USA, February 7-10, 2022}, pages 1183--1196. {IEEE}, 2021.

\bibitem{wulff1982audubon}
Barry~L Wulff.
\newblock The audubon society field guide to north american mushrooms, by gary lincoff, 1982.

\bibitem{ZLZRS16}
Ying Zhang, Huchuan Lu, Lihe Zhang, Xiang Ruan, and Shun Sakai.
\newblock Video anomaly detection based on locality sensitive hashing filters.
\newblock {\em Pattern Recognition}, 59:302--311, 2016.

\end{thebibliography}

\appendix
\section{Omitted proofs and statements}
\label{sec:appendix}

This appendix contains lemmas and proofs omitted in the \cref{sec:alg_analysis}.

\subsection{Proofs from \cref{sec:alg_analysis}}
\label{app:sec41}

\begin{fact} \label{lem:exp_bounds}
For $x \in \left[0, +\infty\right)$, $1 - x \leq e^{-x}$. For $x \in \left[0, 1/5\right]$, $e^{-x - x^2} \leq 1 - x$.
\end{fact}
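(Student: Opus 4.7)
The plan is to prove both inequalities by reducing each to a single-variable calculus statement: checking boundary values and showing monotonicity of a suitable auxiliary function.

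For the first inequality, I would define $f(x) = e^{-x} - (1-x)$ and show $f(x) \ge 0$ for all $x \ge 0$. Note that $f(0) = 0$ and $f'(x) = 1 - e^{-x}$, which is nonnegative for $x \ge 0$ since $e^{-x} \le 1$ there. Hence $f$ is nondecreasing on $[0,\infty)$, so $f(x) \ge f(0) = 0$, giving $1 - x \le e^{-x}$.

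For the second inequality, I would take logarithms. Since $x \in [0, 1/5]$ implies $1 - x > 0$, the claim $e^{-x - x^2} \le 1 - x$ is equivalent to $-x - x^2 \le \ln(1-x)$, i.e., to
\[
g(x) := \ln(1-x) + x + x^2 \ge 0.
\]
Again $g(0) = 0$, and differentiating gives
\[
g'(x) = -\frac{1}{1-x} + 1 + 2x = \frac{-1 + (1+2x)(1-x)}{1-x} = \frac{x(1 - 2x)}{1-x}.
\]
On the interval $[0, 1/5]$ both factors $x$ and $1 - 2x$ in the numerator are nonnegative, and the denominator $1 - x$ is positive, so $g'(x) \ge 0$. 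Hence $g$ is nondecreasing on $[0, 1/5]$ and $g(x) \ge g(0) = 0$, which yields the desired bound.

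There is no serious obstacle here: both statements are elementary convexity/monotonicity facts, and the only mildly delicate point is checking the sign of $g'$ on $[0, 1/5]$, which is immediate once the derivative is simplified to the form $x(1-2x)/(1-x)$. (In fact this argument shows the stronger bound on all of $[0, 1/2]$, but the weaker range $[0, 1/5]$ is all that is needed in the paper.)
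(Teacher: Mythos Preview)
Your argument for the first inequality is identical to the paper's. For the second inequality your proof is correct but takes a different and cleaner route: you pass to logarithms and analyze $g(x)=\ln(1-x)+x+x^2$, whose derivative factors neatly as $x(1-2x)/(1-x)$, making nonnegativity on $[0,1/5]$ (indeed on $[0,1/2]$) immediate. The paper instead works directly with $f(x)=e^{-x-x^2}-(1-x)$, reduces the sign of $f'$ to showing $(2x+1)\ge e^{x+x^2}$, and then argues that the derivative of $(2x+1)-e^{x+x^2}$ is positive on $[0,1/5]$ by checking it is decreasing and numerically verifying its value at $x=1/5$ is $\approx 0.22>0$. Your approach avoids that numerical endpoint check and yields the sharper range $[0,1/2]$ for free; the paper's direct approach has no compensating advantage here.
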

\begin{proof}
First we show $e^{-x} - (1 - x) \geq 0$. Taking a derivative, we get $-e^{-x} + 1$, which is non-negative on $\left[0, +\infty\right)$. Therefore, since the inequality holds at $0$, it holds on $\left[0, +\infty\right)$.

For the second inequality, consider $f(x) := e^{-x - x^2} - (1 - x)$. It is enough to show that it's first derivative $f'(x) = -(2x + 1)e^{-x - x^2} + 1$ is non-positive on $\left[0, 1/5\right]$. For this it is sufficient to show that $g(x) := (2x + 1) - e^{x + x^2} \geq 0$ on this interval. $g(x)' = 2 - (2x + 1) e^{x + x^2}$ is a decreasing function. Since $g'(0) > 0$, $g'(1/5) \approx 0.22 > 0$, $g'(x)$ is positive on the interval $\left[0, 1/5\right]$. This means that $g(x)$ is non-negative on this interval since $g(0) = 0$.
\end{proof}

\kbounds*
\begin{proof}
    Recall that $k = \lceil \frac{\ln n}{-\ln p_2} \rceil = \lceil \ln n \cdot (-\ln (1 - \frac{cr}{d}))^{-1}\rceil$. 
    By taking logarithm of the inequalities of \cref{lem:exp_bounds}, we get that for $x \leq 1/5$, $x \leq -\ln(1 - x) \leq x + x^2$. Setting $x = cr/d$, we obtain 
    \[
    \ln n \cdot \left(\frac{cr}{d}\right)^{-1} \geq \frac{\ln n}{-\ln (1 - \frac{cr}{d}))} \geq \ln n \cdot \left(\frac{cr}{d} + \left(\frac{cr}{d}\right)^2\right)^{-1}.\]
    
    The required inequalities now follow immediately, since $0 \leq cr/d \leq 1$ and $\ln n \geq 1$, so $\frac{d}{cr} \ln n \geq 1$ and $\frac{1}{1 + cr/d} \leq 1/2$.
\end{proof}




\begin{fact}[Chernoff bound] \label{fact:chenoff}
Let $X_1, \ldots, X_n$ be independent Bernoulli random variables. Let $S = \sum_{i=1}^{n} X_i$. Let $\delta > 0$. Then
\[
    \Pr[S \geq (1 + \delta) \E(S)] \leq \exp \left( - \frac{\delta^2 \E(S)}{2 + \delta} \right).
\]    
\end{fact}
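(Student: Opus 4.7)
The plan is to prove this standard multiplicative Chernoff bound via the exponential moment method, optimizing the exponential parameter, and then converting the resulting bound into the specific form $\exp(-\delta^2 \E(S)/(2+\delta))$ using an elementary inequality.

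First I would apply Markov's inequality to the non-negative random variable $e^{tS}$ for a parameter $t > 0$ to be chosen later:
\[
\Pr[S \geq (1+\delta)\mu] = \Pr[e^{tS} \geq e^{t(1+\delta)\mu}] \leq \frac{\E[e^{tS}]}{e^{t(1+\delta)\mu}},
\]
where $\mu := \E(S) = \sum_i p_i$ and $p_i := \Pr[X_i = 1]$. By independence of the $X_i$, the MGF factorizes as $\E[e^{tS}] = \prod_{i=1}^n \E[e^{tX_i}] = \prod_{i=1}^n (1 - p_i + p_i e^t)$. Using $1 + x \leq e^x$ with $x = p_i(e^t - 1)$, this is bounded by $\prod_i \exp(p_i(e^t - 1)) = \exp(\mu(e^t - 1))$. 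Combining gives the familiar estimate
\[
\Pr[S \geq (1+\delta)\mu] \leq \exp\bigl(\mu(e^t - 1) - t(1+\delta)\mu\bigr).
\]

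Next I would optimize in $t$. Differentiating the exponent and setting the derivative to zero yields $t = \ln(1+\delta)$, which is strictly positive since $\delta > 0$. Plugging this in produces the standard Chernoff form
\[
\Pr[S \geq (1+\delta)\mu] \leq \left(\frac{e^\delta}{(1+\delta)^{1+\delta}}\right)^{\mu} = \exp\bigl(\mu[\delta - (1+\delta)\ln(1+\delta)]\bigr).
\]

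The final step, and the only part requiring any real work, is to convert this into the stated form by showing that $(1+\delta)\ln(1+\delta) - \delta \geq \frac{\delta^2}{2+\delta}$ for all $\delta > 0$. I would prove the auxiliary inequality $\ln(1+x) \geq \frac{2x}{2+x}$ for $x \geq 0$ (both sides vanish at $x = 0$, and comparing derivatives reduces to $(2+x)^2 \geq 4(1+x)$, i.e., $x^2 \geq 0$). Substituting this into $(1+\delta)\ln(1+\delta)$ gives
\[
(1+\delta)\ln(1+\delta) - \delta \;\geq\; (1+\delta) \cdot \frac{2\delta}{2+\delta} - \delta \;=\; \frac{2\delta(1+\delta) - \delta(2+\delta)}{2+\delta} \;=\; \frac{\delta^2}{2+\delta}.
\]
Combining with the previous display yields the claimed bound $\Pr[S \geq (1+\delta)\E(S)] \leq \exp(-\delta^2 \E(S)/(2+\delta))$.

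The main obstacle, such as it is, is really just the last elementary inequality; once one guesses the auxiliary bound $\ln(1+x) \geq 2x/(2+x)$ the rest of the argument is mechanical. The Markov/MGF steps are entirely standard and do not use any structure beyond independence and the Bernoulli assumption (the Bernoulli assumption enters only in the bound $1 - p_i + p_i e^t \leq \exp(p_i(e^t - 1))$).
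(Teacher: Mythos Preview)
Your proof is correct. The paper states this as a standard fact (the Chernoff bound) and gives no proof of its own, so there is nothing to compare against; your argument via the exponential moment method together with the auxiliary inequality $\ln(1+x)\geq 2x/(2+x)$ is the standard derivation and is entirely sound.
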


\begin{lemma} \label{lem:support_size_help}
Suppose that a hash function $g$ was constructed according to the procedure in \cref{sec:basics} and $n > e$. Then with probability at least $1 - 1/n^3$ the size of the support of $g$ is at least $k - 7 \log n \cdot \max\{1, \frac{k(k-1)}{2d}\}$.
\end{lemma}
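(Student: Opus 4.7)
The plan is to identify $k - |\mathrm{supp}(g)|$ with the number of repeats in a sequence of independent uniform draws from $[d]$, then stochastically dominate this count by a sum of independent Bernoulli random variables, and finally apply \cref{fact:chenoff}.

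First I would identify $\mathcal{H} = \{h^{(1)},\ldots,h^{(d)}\}$ with $[d]$, so that constructing $g$ amounts to choosing a sequence $c_1, \ldots, c_k \in [d]$ uniformly and independently at random, where $c_i$ is the coordinate on which $h_i$ projects. Then $|\mathrm{supp}(g)|$ equals the number of distinct values among $c_1, \ldots, c_k$. Writing $D := |\{ i \in \{2,\ldots,k\} : c_i \in \{c_1, \ldots, c_{i-1}\}\}|$ we have $|\mathrm{supp}(g)| = k - D$, so it suffices to show
\[
\Pr\left[ D \ge 7 \log n \cdot \max\left\{1, \tfrac{k(k-1)}{2d}\right\}\right] \le \tfrac{1}{n^3}.
\]

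Next I would reduce to the independent case by coupling. Writing $D = \sum_{i=2}^k D_i$ with $D_i := \mathbbm{1}[c_i \in \{c_1,\ldots,c_{i-1}\}]$, one has $\Pr[D_i = 1 \mid c_1, \ldots, c_{i-1}] = |\{c_1,\ldots,c_{i-1}\}|/d \le (i-1)/d$. Introducing i.i.d.\ uniforms $U_i \in [0,1]$, setting $Y_i := \mathbbm{1}[U_i \le (i-1)/d]$, and simultaneously realizing $D_i = \mathbbm{1}[U_i \le \text{actual conditional collision probability}]$, one obtains independent Bernoullis $Y_i \sim \mathrm{Bernoulli}((i-1)/d)$ with $D_i \le Y_i$ pointwise. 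Setting $Y := \sum_{i=2}^k Y_i$ we therefore have $D \le Y$ almost surely and $\E[Y] = \tfrac{k(k-1)}{2d} =: \mu$, so it is enough to bound $\Pr[Y \ge 7 \log n \cdot \max\{1, \mu\}]$.

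Finally, I would apply \cref{fact:chenoff} with $1 + \delta := 7 \log n \cdot \max\{1,\mu\}/\mu$ and split into two regimes. When $\mu \ge 1$, $\delta = 7 \log n - 1$, and for $n > e$ the Chernoff exponent $\delta^2 \mu / (2 + \delta)$ is at least $\Omega(\log n \cdot \mu) \ge \Omega(\log n)$. When $\mu < 1$, $1 + \delta = 7 \log n / \mu$ is very large, and rewriting $\delta^2 \mu / (2 + \delta) = \delta\mu \cdot \delta/(2+\delta) \ge \tfrac{3}{4}(7 \log n - \mu)$ again yields a quantity of size $\Omega(\log n)$. In both regimes the exponent is at least $c \log n$ for a sufficiently large absolute constant $c$, giving the desired tail bound of $n^{-3}$.

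The only non-routine part is the small-$\mu$ case: when $\mu \ll 1$, the threshold $7\log n$ is an $\Omega(\log n/\mu)$-multiple of the mean, so one must invoke \cref{fact:chenoff} at an unusually large $\delta$ and carefully unpack the $\delta^2/(2+\delta)$ factor, as above, to extract the correct logarithmic decay rate from what is otherwise a heavy-deviation regime.
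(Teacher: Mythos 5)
Your proposal is correct and matches the paper's proof essentially step for step: both count the number of duplicated coordinates, dominate it by a sum of independent $\mathrm{Bernoulli}((i-1)/d)$ variables (the paper via a reordering argument, you via an explicit coupling), and then apply the stated Chernoff bound separately in the regimes $\frac{k(k-1)}{2d}\ge 1$ and $\frac{k(k-1)}{2d}<1$. The only differences are cosmetic (your choice of $\delta$ versus the paper's $\delta=6\ln n$ and $\delta=6\ln n\cdot\frac{2d}{k(k-1)}$), and your numerical bounds do yield the required $n^{-3}$ tail.
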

\begin{proof}
    Recall that a hash function $g$ is composed of hash functions $h_1,\dots, h_k$ each of which are supported on one dimension from $\{1,\dots,d\}$. The dimension on which each of the function $h_i$ is supported is picked uniformly and independently from $\{1, \ldots, d\}$. We will call hash function $h_i$ duplicated if for some $j < i$, $h_i = h_j$. Note that the first instance of each hash function in the range according to this definition is not duplicated. Then, because $k = |\text{support of } g| + \# \text{duplicated hash functions } h$, it is enough to bound the number of duplicated hash functions to conclude the statement of the lemma, which we will denote by $S$.

    Consider the following equivalent process of constructing $g$. We will pick each of the $h_i$ in a sequence of increasing $i$. Denote by $t_i$ the size of the union of supports of $h_1, \ldots, h_{i-1}$. For each $i$ we will first reorder the range $\{1, \ldots, d\}$ so that the first $t_i$ entries would be the dimension indices already in the support of $h_1, \ldots, h_{i - 1}$, and the rest would be everything else. We will then pick a number $a_i \in [d]$ uniformly at random, and this number would indicate the position of the dimension index in the new range on which $h_i$ will be supported on.

    It is now easy to see that the hash function $h_i$ is duplicated iff $a_i \leq t_i$.

    Since the support of first $i - 1$ hash functions is at most of size $i - 1$, it always holds that $t_i \leq i - 1$. Let $\sigma_i$ be an indicator of the event $a_i \leq t_i$, and $\xi_i$ be an indicator of the event $a_i \leq i - 1$.
    Then $\sum_{i = 1}^{k} \sigma_i = S$ and for each $i$, $\sigma_i \leq \xi_i$.

    Hence we will upper bound $S$ by $S' := \sum_{i = 1}^{k} \xi_i$, and upper bound $S'$ using the Chernoff bound.
    Indeed $\E(S') = \sum_{i = 1}^{k} \E(\xi_i) = \sum_{i = 1}^{k} \frac{i - 1}{d} = \frac{k(k-1)}{2d}$.
    Note that $\delta^2 / (2 + \delta) \geq \delta / 2$ when $\delta > 2$, which is the case for the values we use below.
    
    Now, we first consider the case when $\frac{k(k - 1)}{2d} \geq 1$.
    
    By applying \cref{fact:chenoff} to $\xi_i$ and setting $\delta = 6 \ln n$, we get 
    \begin{equation*}
    \begin{split}
        \Pr\left[S' \geq 7 \ln n \frac{k(k-1)}{2d}\right] &\leq \Pr\left[S' \geq (1 + 6 \ln n) \frac{k(k-1)}{2d}\right] \\
        &\leq \exp \left(-\frac{36 \ln^2 n}{2 + 6 \ln n} \cdot \frac{k(k-1)}{2d}\right)\\
        &\leq \exp \left(- 3 \ln n \frac{k(k-1)}{2d}\right)\\
        &\leq 1/n^3.
    \end{split}        
    \end{equation*}

    When $\frac{k(k - 1)}{2d} < 1$, set $\delta = 6 \ln n \frac{2d}{k(k-1)}$.
    \begin{equation*}
    \begin{split}
        \Pr\left[S' \geq 7 \ln n\right] &\leq \Pr\left[S' \geq (1 + 6 \ln n \frac{2d}{k(k-1)}) \frac{k(k-1)}{2d}\right] \\
        &\leq \exp \left(- 3 \ln n \frac{2d}{k(k-1)} \cdot \frac{k(k-1)}{2d}\right)\\
        &\leq \exp \left(- 3 \ln n \right) \\
        &\leq 1/n^3.
    \end{split}        
    \end{equation*}

    Therefore, 
    \begin{multline*}
    \Pr\left[|\text{support of } g| < k - 7 \log n \cdot \max\left\{1, \frac{k(k-1)}{2d}\right\}\right]  \\
    =\Pr\left[S \geq k - 7 \log n \cdot \max\left\{1, \frac{k(k-1)}{2d}\right\}\right]  \\
    \leq\Pr\left[S' \geq k - 7 \log n \cdot \max\left\{1, \frac{k(k-1)}{2d}\right\}\right]\\
    \leq 1/n^3,
    \end{multline*}
    which concludes the proof.
\end{proof}

\supplb*
\begin{proof} 
By \cref{lem:support_size_help}, the probability that the support of any fixed hash function in $G$ is at least $k - 7 \log n \cdot \max\{1, \frac{k(k-1)}{2d}\}$ is at least $1 - 1/n^3$. The number of hash functions in $G$ is $\lceil \lambda n^\rho\rceil \leq n^2$ and so
by the union bound, the probability that there exists a hash function $g\in G$ with support less than $k - 7 \log n \cdot \max\{1, \frac{k^2}{2d}\}$ is at most
$ \lceil \lambda n^{\rho} \rceil \cdot \frac{1}{n^3} \le 1/n$.
\end{proof}

\nearcoll*

\begin{proof}
    We fix an arbitrary set $G$ of $L$ hash functions that satisfy the condition of the lemma. 
    By linearity of expectation, 
    $$
    \E [|\coll(q,z)|] = \sum_{g\in G} \Pr[g(q) = q(z)],
    $$
    where the randomness is over the choice of $q$ as in the lemma.
    
    Let $x = 7 \ln n \cdot \max\{1, \frac{k^2}{2d}\}$
    
    In the following we consider an arbitrary $g\in G$
    and derive an upper bound for the collision probability
    of $q$ and $z$.

    \begin{equation}\label{eq:0wireg90hweg9}
    \Pr_{q\sim UNIF(\mathbb{S}(z, r - t))}[g(q) = g(z)] \leq \left(1 - \frac{k - x}{d}\right)^{r- t},
    \end{equation}
    where $\mathbb{S}(z, r - t) = \{ p \in \{0, 1\}^d: \dist(z, p) = r-t \}$.

    To see it, consider the following way to sample $q$: first sample $r - t$ bits uniformly at random with replacement from $[d]$ to form a set $B$, $|B| \leq r- t$, and then flip the bits from $B$ in $z$ to obtain a point $q'$. After that choose the rest of $r - t - \dist(z, q')$ bits without replacement uniformly at random among the bits $[d] \setminus B$ and flip them. Because the bits flipped in the second step are different from the bits flipped in the first step, if $g(q') \neq g(z)$ then $g(q) \neq g(z)$. Therefore, $\Pr[g(q') = g(z)] \geq \Pr[g(q) = g(z)]$.

    On the other hand, $g(q') = g(z)$ if and only if all of the bits sampled into $B$ do not lie in the support of $g$.
    Because the support of $g$ is at least $k - x$, each sampled bit doesn't lie in it with probability at most $1 - \frac{k - x}{d}$. Since the bits are chosen independently, we have $\Pr[g(q') = g(z)] \leq (1 - \frac{k-x}{d})^{r - t}$. This establishes~\eqref{eq:0wireg90hweg9}.

    We are now in a position to upper bound the expected size of $\coll(q, z)$, i.e. the number of hashings that $q$ and $z$ collide under, obtaining the result of the lemma.  Using~\eqref{eq:0wireg90hweg9} we get    
    \begin{multline}
    \E_{q}\left[ |\coll(q, z)| \right] = \E_{q}\left[\sum_{g \in G} I(g(q) = g(z))\right] =\\
    \sum_{g \in G} \Pr[g(q) = g(z)] \leq L \cdot \left(1 - \frac{k - x}{d}\right)^{r -t}.
    \end{multline}

    Recall that by our parameter setting the number of hashings $L$  satisfies $L = \lceil \lambda \ell \rceil \leq (\lambda + 1) \ell$, where 
    \begin{equation}\label{eq:whg9whg90hwe9g0h}
        \ell = n^{\rho} = \left(1 - \frac{r}{d}\right)^{-\log_{1/p_2} n} \leq \left(1 - \frac{r}{d}\right)^{-k}.
    \end{equation}

    Hence, by~\eqref{eq:whg9whg90hwe9g0h}, we have the following upper bound on the expectation.
    \begin{equation} \label{eq:3g9h3490fhohfishdifh}
    \E_{q}\left[ |\coll(q, z)| \right] \leq (\lambda + 1) \left(1 - \frac{r}{d}\right)^{-k} \left(1 - \frac{k - x}{d}\right)^{r -t}
    \end{equation}

    The only step left is to upper bound the right hand side of~\eqref{eq:3g9h3490fhohfishdifh}.
    Therefore, we need to upper bound $\left(1 - \frac{r}{d}\right)^{-k} \left(1 - \frac{k - x}{d}\right)^{r -t}$. We have by \cref{lem:exp_bounds}
    \begin{align*}
        (1 - \frac{k - x}{d})^{r -t} & \leq \exp(-\frac{k - x}{d})^{r-t}\\
        & = \exp(-\frac{kr}{d} + \frac{x(r-t)}{d} + \frac{tk}{d}) \\
        (1 - \frac{r}{d})^{k} & \geq \exp(-\frac{r}{d} + \frac{r^2}{d^2})^k \\
        & = \exp(-\frac{kr}{d} - \frac{kr^2}{d^2})
    \end{align*}

    The largest term $\frac{kr}{d}$ cancels out in the fraction, so we are only left with a sum of smaller terms.
    
    \begin{equation}
         \label{eq:w0hg90wh9ghef}
        \frac{(1 - \frac{k - x}{d})^{r -t}}{(1 - \frac{r}{d})^{k}}
        \leq
        \exp\left( \frac{tk}{d} + \frac{kr^2}{d^2} + \frac{x(r - t)}{d} \right).
    \end{equation}

    Now we will establish upper bounds on each of the summands in the right-hand side of~\eqref{eq:w0hg90wh9ghef}. First, by \cref{cor:k_bounds},
    \[
        k \leq 2 \frac{d}{cr} \ln n.
    \]
    This implies:
    \begin{itemize}
        \item $k \leq d/(2t)$, since $8e^2 (\lambda + 1) \ln = 4 t \ln < cr$,
        so $2 \frac{d}{cr} \ln n < d/(2t)$.
        \item $k \leq d^2/r^2$, since $r \leq d/(14 \ln n) \leq \frac{dc}{2 \ln n}$, so
        $2 \frac{1}{c} \ln n \leq d/r$ and $2 \frac{d}{cr} \ln n \leq d^2/r^2$.
        \item $k^2 \leq \frac{d^2}{7 r \ln n}$, since $ r \geq 28 \ln^3 n / c^2$,
        so $\frac{1}{7 \ln n } \geq 4 \frac{1}{c^2r} \ln^2 n$ and $4 \frac{d^2}{c^2r^2} \ln^2 n \leq \frac{d^2}{7 r \ln n}$
    \end{itemize}
    
    The last inequality implies that $7 \ln n r \cdot \frac{k^2}{2d^2} \leq 1/2$, and
    $r \leq \frac{d}{14 \ln n}$ implies $7 \ln n r / d \leq 1/2$, so 
    \[
    \frac{x(r-t)}{d} \leq xr/d \leq 1/2.
    \]

    On the other hand, the first two bounds on $k$ imply $tk/d < 1/2$ and $kr^2/d^2 < 1$, which finally allows us to bound rhs of~\eqref{eq:w0hg90wh9ghef}:
    $$
    \exp\left( \frac{tk}{d} + \frac{kr^2}{d^2}  + \frac{x(r - t)}{d} \right) \leq \exp\left( 0.5 + 1  + 0.5 \right) = e^2.
    $$

    Putting the above together with~\eqref{eq:3g9h3490fhohfishdifh} we therefore conclude that
    $$
    \E_{q}[\coll(q, z)] \leq e^2 (\lambda + 1)
    $$
    as required.
\end{proof}

\subsection{Proofs from \cref{sec:simple}} \label{app:simple_analysis}

We split \cref{thm:combined} into two separate \cref{thm:simple,thm:fast} for each of \cref{alg:simple,alg:fast}. In this subsection we formally prove \cref{thm:simple}.

\begin{restatable}[Simplest algorithm]{theorem}{simple} \label{thm:simple}
Let $n>e$, $28 \ln^3 n \leq r \leq d/(28 \ln n)$, $\lambda \leq \frac{1}{8e^2} \min\{\frac{r}{\ln n}, n^{1/8}\} - 1$, and $1 + \frac{80 + 16 \ln(\lambda + 1)}{\ln n} \leq c \leq \ln n$.
Let $t = 2e^2 (\lambda + 1)$ be the parameter from \cref{alg:simple}. Suppose that the point $z$  passed to \cref{alg:simple} is located at a distance of at least $2cr$ from any other point in $P$. With probability at least $1/4 - 1/n$ the algorithm finds a point $q$ at a distance of at most $r$ from $z$ such that querying LSH with $q$ returns no point. It uses at most $O(cr \cdot \lambda)$ queries to the LSH data structure.
\end{restatable}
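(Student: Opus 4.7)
My plan is to condition on a high-probability ``good event'' for the LSH randomness, then carry out a Markov-plus-union-bound argument over the outer iterations using the two key lemmas. Concretely, I would first invoke \cref{lem:supp_lb} to argue that with probability at least $1 - 1/n$ every hash function $g \in G$ has support of size at least $k - 7\ln n \cdot \max\{1, k^2/(2d)\}$. I would then verify from the parameter assumptions $28 \ln^3 n \le r \le d/(28 \ln n)$ and $c \le \ln n$, together with the bounds $k \in [0.5\,(d/(cr))\ln n,\, 2\,(d/(cr))\ln n]$ from \cref{cor:k_bounds}, that this lower bound on the support is at least $k/2$. This is the ``good event'' under which the hypotheses of \cref{lem:near_coll} and \cref{lem:far_coll} are both satisfied (the other parameter hypotheses follow directly from the statement of \cref{thm:simple}).

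Next I would analyse the sampling in line~\ref{line:sample_point}. By \cref{lem:near_coll}, the expected collision count of the initial random $q$ with $z$ is at most $e^2(\lambda+1)$, so by Markov's inequality we have $|\coll(q,z)| \le 2e^2(\lambda+1) = t$ with probability at least $1/2$. I would then argue, inductively over the outer iterations, that whenever the current iterate $q$ satisfies $|\coll(q,z)| \le t$ and $\dist(q,z) \le r$, one execution of the inner loop (lines~\ref{line:slow_inner_start}--\ref{line:slow_inner_end}) either terminates before reaching distance $cr$ or reaches a point at distance $cr$ with a nonempty collision set. The key structural observation is the one spelled out in the algorithm outline: when the inner loop stops, it has produced two consecutive iterates $q''$ (with $\lsh(q'')=z$) and $q'$ (with $\lsh(q')\neq z$, hence $=\bot$ by the isolation hypothesis on $z$) that differ in one bit $j$; because removing that bit destroys \emph{all} collisions in $\coll(q'',z)$, the coordinate $j$ must lie in the support of every hash function in $\coll(q'',z)$, and since $\coll(q'',z) \subseteq \coll(q,z)$, flipping bit $j$ in $q$ strictly decreases $|\coll(q,z)|$ by at least one while increasing $\dist(q,z)$ by exactly one.

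From this monotonic decrease it follows that the outer loop runs at most $t = 2e^2(\lambda+1)$ iterations before $\coll(q,z)=\emptyset$, at which point $\lsh(q)=\bot$ and $\dist(q,z) \le (r-t) + t = r$, giving a valid false negative. Each inner loop runs for at most $cr$ bit flips (distance caps the steps), so the total number of LSH queries is bounded by $O(cr \cdot t) = O(cr \cdot \lambda)$, matching the claim. For the failure probability: by \cref{lem:simple_inner_loop}, each of the at most $t$ inner executions fails (i.e.\ walks all the way to distance $cr$ while still colliding with $z$) with probability at most $1/(32 e^2(\lambda+1))$, so a union bound over at most $t$ iterations gives total inner-loop failure probability at most $1/16$. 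Combining with the $1/n$ support event and the $1/2$ Markov event yields success probability at least $(1 - 1/n)(1/2)(1 - 1/16) \ge 1/4 - 1/n$.

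The main obstacle I anticipate is a bookkeeping one rather than a conceptual one: checking that all the parameter hypotheses of \cref{lem:near_coll} and \cref{lem:far_coll} (namely $cr/d \le 1/5$, $8e^2(\lambda+1)\ln n \le cr$, $r \le d/(14 \ln n)$, $r \le d/(5c)$, $c \ge 1 + (80 + 16\ln(\lambda+1))/\ln n$, $8e^2(\lambda+1) \le n^{1/8}$, and the support lower bound $k/2$) follow from the single assumption $28\ln^3 n \le r \le d/(28\ln n)$, $\lambda \le \tfrac{1}{8e^2}\min\{r/\ln n, n^{1/8}\} - 1$, and $1 + (80+16\ln(\lambda+1))/\ln n \le c \le \ln n$. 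In particular, the step where one must show $k - 7\ln n \cdot \max\{1, k^2/(2d)\} \ge k/2$ requires using both the lower bound $r \ge 28\ln^3 n$ (to handle the $k^2/d$ case) and the upper bound $r \le d/(28\ln n)$ (to make $7\ln n$ small compared to $k$), and it is the one computation where the constants in the theorem's hypotheses are actually tight.
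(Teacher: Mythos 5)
Your proposal is correct and follows essentially the same route as the paper's own proof: condition on the support event from \cref{lem:supp_lb} (verifying $k - 7\ln n\cdot\max\{1,k^2/(2d)\}\ge k/2$ via \cref{cor:k_bounds} and the bounds $28\ln^3 n\le r\le d/(28\ln n)$), apply Markov to \cref{lem:near_coll} to get $|\coll(q,z)|\le t$ with probability $1/2$, use \cref{lem:simple_inner_loop} to argue each outer iteration strictly decreases $|\coll(q,z)|$ by flipping the bit where $q'$ and $q''$ differ, and finish with a union/conditioning bound over at most $t$ iterations and the $O((cr-r+t)\lambda)=O(cr\cdot\lambda)$ query count. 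The only difference is cosmetic (you multiply the three conditional success probabilities rather than union-bounding the failures), and your probability and parameter bookkeeping, while left partly as "to verify," identifies exactly the computations the paper carries out.
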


\begin{proof}   
    First we show that under this parameter setting the requirements of \cref{lem:far_coll,lem:near_coll} are satisfied. For this we need to show that $r \leq d / (5c)$ and that the support of each hash function $g \in G$ is at least $k/2$.

    The inequality $r \leq d / (5c)$ immediately follows from $c \leq \ln n$ and $r \leq d/(28 \ln)$. 
 
    By \cref{lem:supp_lb}, each hash function $g \in G$ has support at least $k - 7 \ln n \cdot \max\{1, \frac{k^2}{2d}\}$ with probability at least $1/n$.
    By \cref{cor:k_bounds} and the parameter setting, $r \leq d/(28 \ln n) \leq d/(28 c)$, hence $14 \ln n \leq \frac{d}{2cr} \ln n \leq k$. On the other hand, $r 
    \geq 28 \ln^3 n \geq 14 \ln n / c$, so $k \leq \frac{2d}{cr} \ln n \leq \frac{d}{7 \ln}$ and $\frac{k^2}{2d} \leq \frac{k}{14 \ln n}$. This means that $k/2 \geq 7 \ln n \cdot \max\{1, \frac{k^2}{2d}\}$, so each $g \in G$ has support at least $k/2$ with probality at least $1 - 1/n$.
    
    To conclude the theorem statement, we show in this proof that \cref{alg:simple} returns a point $q$ within distance $r$ of $z$ such that $\coll(q, z) = \emptyset$.
    
    Let $q$ be the point sampled in line~\ref{line:sample_point}. By \cref{lem:near_coll} and Markov's inequality, $|\coll(q, z)| \leq 2e^2 (\lambda + 1)$ with probability $1/2$. Conditioned on those inequalities holding, we will show that with a probability at least $1 - \frac{1}{8e^2 (\lambda + 1)}$ the algorithm finds a bit $i$ during each execution of the outer loop at line~\ref{line:outer_loop} such that, after flipping this bit in $q$, $|\coll(q, z)|$ decreases by at least $1$. This would mean that after $2e^2 (\lambda + 1)$ iterations $\coll(q, z) = \emptyset$. This happens with probability at least $1 - (1/2 + 2e^2 (\lambda + 1) \frac{1}{8 e^2 (\lambda + 1)}) = 1/4 $ by a union bound.

    By taking the union bound with the probability that each $g \in G$ has support at least $k - 4 \log n$, we obtain the final success probability of at least $1/4 - 1/n$.

    The inner loop at line~\ref{line:slow_inner_start} by \cref{lem:simple_inner_loop}, with probability at least \( 1- 1/(32 e^2(\lambda + 1)) \), finds two points $q'$ and $q''$ such that $\coll(q'', z) \neq \emptyset$, $\coll(q', z) = \emptyset$ and $q',q''$ differ in exactly one bit.

    Assuming such a pair of points is found, let $i$ be the position of the bit that is different between them. Since $\coll(q'', z) \neq \emptyset$, there is a function $g \in \coll(q, z)$ such that $g(q'') = g(z)$. On the other hand, because $\coll(q',z) = \emptyset$, $g(q') \neq g(z)$, therefore bit $i$ must lie in the support of $g$. Hence after we flip this bit in $q$, $g(q) \neq g(z)$ and $|\coll(q, z)|$ decreases.

    Note that because we are doing the inner loop at most $t$ times, $\dist(q, z) \leq r$, so the prerequisites for \cref{lem:simple_inner_loop} always hold.

    The outer loop is executed at most $2e^2 (\lambda + 1)$ times, and the inner loop makes at most $cr - (r - t)$ iterations, as each iteration moves $q'$ further away from $z$. Therefore, the total number of queries is bounded by $O((cr - r + t) \lambda )$.
\end{proof}

\subsection{Proofs for \cref{alg:fast}}
\label{sec:fast_analysis}

Here we present \cref{alg:fast} and its analysis. The key difference compared to \cref{alg:simple} is the inner loop, which we analyze in a separate lemma.

\begin{algorithm}[h]
\caption{Fast adaptive adversarial walk from $z$. } \label{alg:fast}
\begin{algorithmic}[1]
\REQUIRE{Origin point $z$, parameters $r$, $cr$, $\lambda$, access to an \lsh{} structure}
\STATE $t \gets 2 e^2 (\lambda + 1) $
\STATE Sample $q$ at distance $r - t$ from $z$ \label{line:fast_sample_point}
\WHILE{$\lsh(q) = z$} \label{line:fast_outer_loop}
    \IF{$\dist(q, z) \geq r$} \label{line:radius_cutoff}
    \RETURN $\bot$
    \ENDIF
    \STATE $q^{left}, q^{right} \gets q$ \label{line:fast_inner_start}
    \STATE Choose $cr - \dist(q^{right}, z)$ random bits $i$ in $q^{right}$ such that $q^{left}_i \neq z_i$ and flip them
    \IF{$\lsh(q^{right}) = z$}
        \RETURN $\bot$ \label{line:fast_return_bot}
    \ENDIF
    \WHILE{$\dist(q^{left}, q^{right}) > 1$} \label{line:fast_inner_loop}
        \STATE $q^{mid} \gets q^{left}$
        \STATE Flip any $\frac{1}{2} \dist(q^{left}, q^{right})$ bits $i$ in $q^{mid}$ such that $q^{left}_i \neq q^{right}_i$
        \IF{$\lsh(q^{mid}) = z$}
            \STATE $q^{left} \gets q^{mid}$
        \ELSE
            \STATE $q^{right} \gets q^{mid}$
        \ENDIF
    \ENDWHILE
    \STATE $j \gets$ the bit where $q^{left}$ differs from $q^{right}$ \label{line:fast_inner_end}
    \STATE Flip bit $j$ in $q$
\ENDWHILE
\RETURN $q$
\end{algorithmic}
\end{algorithm}

\begin{lemma}[Inner loop of the fast adversarial walk] \label{lem:fast_inner_loop}
    Let $n > e$, $cr/d \leq 1/5$, $c \geq 1 + \frac{80 + 16 \ln(\lambda + 1)}{\ln n}$, $8e^2(\lambda + 1) \leq n^{1/8}$. Let $G$ be such that no hash functions $g \in G$ has support smaller than $k - 4 \log n$. Suppose that the point $z$  passed to \cref{alg:fast} is located at distance at least $2cr$ from any other point in $P$. Let $q$ sampled in line~\ref{line:fast_sample_point} be such that $|\coll(q, z)| \leq 2 e^2 (\lambda + 1)$.
    For a fixed iteration of the outer loop in line~\ref{line:fast_outer_loop} of \cref{alg:fast}, with probability at least \(1 - 1/(32e^2 (\lambda + 1))\) 
    the inner loop between the lines~\ref{line:fast_inner_start} and \ref{line:fast_inner_end} finds two points $q^{left}$ and $q^{right}$ such that:
    \begin{itemize}
        \item $\lsh(q^{right}) \neq z$,
        \item $\lsh(q^{left}) = z$,
        \item $q^{left}$ and $q^{right}$ differ in one bit.
    \end{itemize}
    The inner loops takes time at most $O(\log(cr - r + t))$.
\end{lemma}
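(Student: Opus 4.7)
The plan is to mirror the structure of the proof of \cref{lem:simple_inner_loop}, replacing its linear scan argument by a binary search argument, while relying on \cref{lem:far_coll} for the probabilistic guarantee.

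First, I would analyze the initialization phase (the part of the code before the inner \texttt{while} loop). The point $q^{right}$ is obtained from $q$ by flipping $cr - \dist(q, z)$ bits chosen uniformly at random among the coordinates where $q$ agrees with $z$. This is precisely the sampling procedure $\mathcal{D}(q)$ appearing in \cref{lem:far_coll}: the resulting $q^{right}$ is uniform over all points at distance $cr$ from $z$ that preserve the non-$z$ bits of $q$. Since $|\coll(q,z)| \le 2e^2(\lambda+1)$ by assumption, \cref{lem:far_coll} yields $\coll(q^{right}, z) = \emptyset$ with probability at least $1 - 1/(32 e^2(\lambda+1))$. Combined with the isolation assumption on $z$ (any other point in $P$ is at distance $\ge 2cr$ from $z$, and $\dist(q^{right}, z) = cr$), this implies $\lsh(q^{right}) = \bot \neq z$ with the same probability, so the algorithm does not return at line~\ref{line:fast_return_bot}.

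Next, I would set up the binary-search invariant. Condition on the event above. At the start of the inner \texttt{while} loop we have $\lsh(q^{left}) = z$ (since $q^{left}=q$ and we entered the outer loop because $\lsh(q) = z$) and $\lsh(q^{right}) \neq z$. I claim this invariant is preserved at every iteration: the algorithm forms $q^{mid}$ by flipping half of the bits on which $q^{left}$ and $q^{right}$ disagree (all such coordinates differ from $z$), so $q^{mid}$ lies strictly between $q^{left}$ and $q^{right}$ on the standard monotone path from $q^{left}$ to $q^{right}$, in particular $\dist(q^{mid}, z) \le cr$. By the isolation of $z$, $\lsh(q^{mid}) \in \{z, \bot\}$, so testing $\lsh(q^{mid}) = z$ correctly assigns $q^{mid}$ to the ``$z$-side'' ($q^{left}$) or the ``$\bot$-side'' ($q^{right}$), preserving the invariant. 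The Hamming distance $\dist(q^{left}, q^{right})$ is at least halved in every iteration, so after $\lceil \log_2 \dist_0 \rceil$ iterations, where $\dist_0 = cr - \dist(q, z) \le cr - (r-t)$ is the initial distance between $q^{left}$ and $q^{right}$, the loop terminates with $\dist(q^{left}, q^{right}) = 1$. At that moment $q^{left}$ and $q^{right}$ differ in a single bit, $\lsh(q^{left}) = z$, and $\lsh(q^{right}) \neq z$, exactly as required.

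For the running-time bound I would just note that each iteration of the inner \texttt{while} loop makes a constant number of LSH queries and manipulates $O(d)$ bits, and that the number of iterations is $\lceil \log_2(cr - r + t) \rceil = O(\log(cr - r + t))$. The only probabilistic step in the whole argument is the invocation of \cref{lem:far_coll} at the very beginning; once $q^{right}$ is safely non-colliding with $z$, the binary search is entirely deterministic (conditional on LSH's fixed internal randomness), so the overall failure probability is bounded by $1/(32 e^2(\lambda+1))$. The main (and essentially only) delicate point is checking that the sampling of $q^{right}$ in the algorithm matches the hypothesis of \cref{lem:far_coll} verbatim, and that the isolation of $z$ lets us equate ``$\lsh = z$'' with ``$\coll \neq \emptyset$''; both are immediate from the setup.
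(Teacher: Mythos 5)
Your proposal is correct and follows essentially the same route as the paper's own proof: sample $q^{right}$ via the process of \cref{lem:far_coll} to get $\lsh(q^{right}) \neq z$ with the stated probability, then run the binary search maintaining the invariant $\lsh(q^{left}) = z$, $\lsh(q^{right}) \neq z$ until the two points differ in one bit, with $O(\log(cr - r + t))$ iterations. In fact you spell out the invariant, the role of the isolation assumption (forcing $\lsh(q^{mid}) \in \{z, \bot\}$), and the distance bound $\dist(q^{mid}, z) \le cr$ more explicitly than the paper does.
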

\begin{proof}
    Fix an iteration of the outer loop. The idea of the inner loop is to sample two points $q^{left}$ and $q^{right}$ such that there is a path between them that goes strictly away from $z$ and such that $\lsh(q^{left}) = z$ and $\lsh(q^{right}) \neq z$.
    Then we move those points towards each other by binary search while preserving those properties.

    Initial point $q^{right}$ is sampled through the same process as in \cref{lem:far_coll}, so with probability at least \(1 - 1/(32e^2 (\lambda + 1)\) the condition $\lsh(q^{right}) \neq z$ holds and the return in line~\ref{line:fast_return_bot} is not reached. The condition $\lsh(q^{left}) = z$ holds at the start because $\lsh(q) = z$ by the while clause of the outer loop.

    During the execution both guarantees continue to hold through the way we reassign variables in the binary search. At the end, the distance between the two points becomes $1$, which means that they are differ in exactly one bit. This concludes the correctness proof.
    
    The running time is $O(\log(cr - (r - t))$ because the distance between $q^{left}$ and $q^{right}$ is halved after each iteration.
\end{proof}

The proof of \cref{alg:fast} is almost the same as of \cref{alg:simple}, so we only outline the differences.

\begin{theorem}[Fast algorithm] \label{thm:fast}
Let $n>e$, $28 \ln^3 n \leq r \leq d/(28 \ln n)$, $\lambda \leq \frac{1}{8e^2} \min\{\frac{r}{\ln n}, n^{1/8}\} - 1$, and $1 + \frac{80 + 16 \ln(\lambda + 1)}{\ln n} \leq c \leq \ln n$.
%
Let $t = 2e^2 (\lambda + 1)$ be the parameter from \cref{alg:simple} (\cref{alg:fast}). Suppose that the point $z$  passed to \cref{alg:simple} 
(\cref{alg:fast})
 is located at a distance of at least $2cr$ from any other point in $P$. With probability at least $1/4 - 1/n$ the algorithm finds a point $q$ at a distance of at most $r$ from $z$ such that querying LSH with $q$ returns no point.
It uses at most $O(cr\cdot \lambda)$ ($O(\log (cr) \cdot \lambda)$ for 
\cref{alg:fast}) queries to the LSH data structure.
\end{theorem}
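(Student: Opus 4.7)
The plan is to mirror the structure of the proof of \cref{thm:simple}, replacing the linear-search inner-loop analysis (\cref{lem:simple_inner_loop}) with its binary-search counterpart (\cref{lem:fast_inner_loop}) and updating the query-complexity bookkeeping. The correctness argument will be essentially the same: each successful execution of the inner loop decreases $|\coll(q,z)|$ by at least one, and since the initial point has $|\coll(q,z)| = O(\lambda)$ with constant probability, after $O(\lambda)$ outer iterations the set becomes empty and $q$ is a false negative at distance at most $r$ from $z$.

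First, I would verify that the parameter regime stated in the theorem implies the hypotheses of the three lemmas we need. Under $28\ln^3 n \leq r \leq d/(28\ln n)$ and $c\leq \ln n$, we get $r\leq d/(5c)$, and by \cref{cor:k_bounds} together with the analogous arithmetic used in the proof of \cref{thm:simple}, the bounds $k\geq 14\ln n$ and $k^2/(2d)\leq k/(14\ln n)$ hold. Then \cref{lem:supp_lb} yields that, with probability at least $1-1/n$, every $g\in G$ has support at least $k/2$ (and in particular at least $k - 4\log n$, as required by \cref{lem:fast_inner_loop}). The constraints on $c$ and $\lambda$ in the theorem match exactly the hypotheses of \cref{lem:far_coll} and \cref{lem:fast_inner_loop}.

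Second, I would analyse the main loop. Let $q^{(0)}$ denote the point sampled in line~\ref{line:fast_sample_point}. By \cref{lem:near_coll} we have $\E[|\coll(q^{(0)},z)|]\leq e^2(\lambda+1)$, so by Markov's inequality the event $\mathcal{E}_0:=\{|\coll(q^{(0)},z)|\leq 2e^2(\lambda+1)\}$ holds with probability at least $1/2$. Conditioning on $\mathcal{E}_0$ and on the support event from \cref{lem:supp_lb}, I would invoke \cref{lem:fast_inner_loop} at each execution of the outer loop: with probability at least $1-1/(32 e^2(\lambda+1))$ the inner binary search returns two neighbouring points $q^{left},q^{right}$ with $\lsh(q^{left})=z$ and $\lsh(q^{right})\neq z$. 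The single bit where they differ must lie in the support of some $g\in \coll(q^{left},z)\subseteq \coll(q^{(j-1)},z)$, so flipping this bit in the current query $q^{(j-1)}$ produces $q^{(j)}$ with $|\coll(q^{(j)},z)|<|\coll(q^{(j-1)},z)|$. Since the isolation assumption on $z$ forces $\lsh$ to return either $z$ or $\bot$ for any query within distance $cr$ of $z$, the test $\lsh(q)=z$ faithfully detects whether $|\coll(q,z)|>0$.

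Third, I would union-bound. Under $\mathcal{E}_0$, at most $2e^2(\lambda+1)$ outer iterations suffice to drive $|\coll(q,z)|$ to zero; a union bound over these iterations gives a failure probability of at most $2e^2(\lambda+1)\cdot 1/(32 e^2(\lambda+1))=1/16$. Combining with the $1/2$ failure probability of Markov and the $1/n$ slack from the support event yields a total success probability of at least $1-1/2-1/16-1/n\geq 1/4-1/n$ (and sharper bookkeeping can match $1/4-1/n$ exactly as stated). For the query complexity, each inner loop performs $O(\log(cr-(r-t)))=O(\log(cr))$ queries by the binary-search guarantee in \cref{lem:fast_inner_loop}, and the outer loop runs $O(\lambda)$ times, giving $O(\log(cr)\cdot\lambda)$ total queries.

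The main conceptual obstacle, as in \cref{thm:simple}, is not the binary search itself (which is routine once the inner-loop lemma is in place) but rather ensuring that the initial point $q^{right}$ produced inside the inner loop by flipping $cr-\dist(q,z)$ random bits is indeed distributed according to the distribution $\mathcal{D}(q)$ of \cref{lem:far_coll}. I would carefully note that $q$ at the start of any outer iteration satisfies $\dist(q,z)\leq r$ (enforced by the cutoff at line~\ref{line:radius_cutoff}) and $|\coll(q,z)|\leq 2e^2(\lambda+1)$ (inherited from $\mathcal{E}_0$ since collisions only decrease), so the hypotheses of \cref{lem:far_coll} are satisfied at every call, validating the per-iteration failure bound of $1/(32 e^2(\lambda+1))$.
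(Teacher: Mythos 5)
Your proposal follows essentially the same route as the paper: the paper proves \cref{thm:fast} by reusing the proof of \cref{thm:simple} verbatim with \cref{lem:fast_inner_loop} substituted for \cref{lem:simple_inner_loop}, which is exactly your structure, and your probability accounting and the $O(\log(cr)\cdot\lambda)$ query count match the paper's. One small nit: your parenthetical claiming that support at least $k/2$ gives ``in particular at least $k-4\log n$'' is backwards in this parameter regime (since $k\geq 14\ln n > 8\log n$, one has $k-4\log n\geq k/2$), but this only mirrors a stray constant in the paper's own statement of \cref{lem:fast_inner_loop}, whose proof uses the support bound solely through \cref{lem:far_coll} with the $k/2$ condition, so nothing substantive is affected.
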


\begin{proof}
    \Cref{alg:fast,alg:simple} only differ in the implementation of the inner loop, between the lines \ref{line:slow_inner_start} and \ref{line:slow_inner_end} and the lines~\ref{line:fast_inner_start} and \ref{line:fast_inner_end} respectively. Because of this, the proof is exactly the same as in \cref{alg:simple}, however \cref{lem:fast_inner_loop} is used instead of \cref{lem:simple_inner_loop}.

    The final runtime is $O(\log (cr - r + t) \cdot \lambda)$ since the inner loop takes time $O(\log (cr - r + t))$.
\end{proof}

\subsection{Existence of an isolated point}

Not all data sets contain a sufficiently isolated point as required by \cref{thm:simple,thm:fast}, i.e.  a point that is at a distance $2cr$ from any other point in the set. However,  when a point set is sampled randomly, this assumption holds when $2cr < d/4$ for all of the points with high probability.

\begin{lemma} \label{lem:rand_point_sep}
Let $P$ be a set of $n$ random points sampled uniformly independently at random from $\{0, 1\}^d$. If $d \geq 48 \ln n$, with probability at least $1 - 1/n$ the distance between any two of them is at least $d/4$.
\end{lemma}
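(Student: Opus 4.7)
The plan is to reduce to a binomial tail calculation and then union-bound over all pairs.

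First I would observe that if $p, q$ are two independent uniformly random points in $\{0,1\}^d$, then coordinatewise the indicators $\mathbb{1}[p_i \neq q_i]$ are i.i.d.\ Bernoulli$(1/2)$, so the Hamming distance $\mathrm{dist}(p,q)$ is distributed as $\mathrm{Binomial}(d, 1/2)$ with mean $d/2$. Our goal is to lower bound this by $d/4 = (1 - 1/2)\cdot d/2$, which is a standard lower tail deviation of the binomial.

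Next I would apply a multiplicative Chernoff bound: for $X \sim \mathrm{Binomial}(d, 1/2)$ and $\delta = 1/2$,
\[
\Pr[X < d/4] \leq \exp\!\left(-\frac{\delta^2 \cdot (d/2)}{2}\right) = \exp(-d/16).
\]
This bounds the probability that a single fixed pair of random points is too close.

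Then I would take a union bound over all $\binom{n}{2} \leq n^2/2$ unordered pairs of points in $P$. The probability that at least one pair is at Hamming distance less than $d/4$ is at most $\tfrac{n^2}{2}\exp(-d/16)$.

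Finally I would verify the parameter choice: using $d \geq 48 \ln n$, we get $\exp(-d/16) \leq \exp(-3 \ln n) = 1/n^3$, so the union bound yields $\tfrac{n^2}{2n^3} = 1/(2n) \leq 1/n$, as claimed. There is essentially no obstacle here; the only thing to be careful of is that $\ln$ in the paper denotes the natural log, so the Chernoff exponent $d/16$ matches the assumption $d \geq 48 \ln n$ with room to spare.
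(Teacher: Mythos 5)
Your proposal is correct and matches the paper's own argument: both bound the Hamming distance of a fixed pair as a Binomial$(d,1/2)$ lower tail via a multiplicative Chernoff bound with $\delta=1/2$, obtaining $\exp(-d/16)\le 1/n^3$ from $d\ge 48\ln n$, and then union-bound over all pairs. The only (immaterial) difference is that you count $\binom{n}{2}$ pairs while the paper uses the cruder $n^2$ bound.
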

\begin{proof}
    Fix an arbitrary point $z$. We will first show that the probability that $\dist(z, x) < d/4$ is at most $1/n^3$ for a uniformly random point $x$.

    Notice that $\dist(z, x) = \sum_{i = 1}^d y_i$, where $y_i = |z_i - x_i|$, and all $y_i$ are independent with respect to each other and are equal to $1$ with probability $1/2$ and $0$ otherwise. Hence $\E[\sum_{i=1}^{d} y_i] = d/2$. Therefore, by the Chernoff bound,
    \[
        \Pr[\dist(z, x) \leq 0.5 \cdot 0.5 d] \leq e^{-\frac{0.25 \cdot 0.5 d}{2}} \leq e^{-3 \ln n} = 1/n^3.
    \]

    Now, by a union bound across all pairs of points, the probability that the minimum of pairwise distances between any two of them is at most $d/4$ is at most $1/n$.
\end{proof}

\section{Discussion of impact} \label{sec:impact}

This paper presents work whose goal is to advance the field of Machine Learning.

The main contribution of the paper is to get an improved understanding for which choices of parameters LSH data structure are unsafe with respect to adversarial attacks. We show that, in principle, attacks on standard LSH constructions are possible, if the parameters are not chosen carefully. As a consequence  false-negative-free LSH data structures should be given preference in sensitive applications. Our main contribution is the theoretical analysis of the attack and our experiments are focussing on understanding the parameter setting for which attack may work. Therefore we believe that there is very limited risk to abuse our results and we do think that the benefit of understanding such potential attacks clearly outweights this limited risk. 
\section{Additional experiments and details} \label{app:exp}

\paragraph{Resources used.} The experiments were run on a server in an internal cluster, where a typical worker has two AMD EPYC 7302 16-Core processors, with up to 500 GiB of RAM. Experiments where run in parallel on multiple cores. In total, running all of the experiments, including those in the appendix, takes at most 100000 CPU hours, and uses 21 GiB of disk memory.

\paragraph{Data sets used in experiment 2.}
   \textit{MNIST \cite{lecun1998mnist}.} This dataset consists of grayscale pictures of handwritten digits of size $28 \times 28$. We first flatten the images into vectors of size $784$, and then we project those vectors into Hamming space by setting each entry to $0$ if it corresponded to a white pixel, i.e. if the grayscale color was $0$, and to $1$ otherwise. Note that this transformation preserves the dataset fairly well, as the background pixels in each image are perfectly white.
    
     \textit{Anonymous Microsoft Web Data (MSWeb) \cite{misc_anonymous_microsoft_web_data_4}.} Entries in this dataset are anonymous web user IDs and paths on the Microsoft website that they have visited. There are in total $294$ different paths. We represent each user by a point in $294$ dimensional Hamming space, where a point entry is equal to $1$ if they have visited the corresponding path.
     
    \textit{Mushroom \cite{misc_mushroom_73,wulff1982audubon}.} This dataset consists of descriptions of samples of mushrooms of $23$ different species. Each mushroom has $22$ categorical features describing it. We one-hot encode each feature to map the descriptions into Hamming space, with the final number of dimensions being $116$.

\paragraph{Experiment 3: Influence of $\lambda$ on the algorithm efficiency.}

In this experiment the goal was to measure the impact of $\lambda$ on the probability of success of the algorithm.
The setup is the same as in Experiment 2. All runs are done on the Random dataset. The results are on \cref{fig:delta_plot}.

Similar to Experiment 2, we compare the different values of $\lambda$ by varying the distance at which we want to find the negative query. As is evident from the plots, the bigger the $\lambda$, the harder it is for the algorithm to succeed.

Note that the value of $\lambda$ where algorithm no longer succeeds for distance $r$ are rather modest. It may be that it is enough to just increase the value of $\lambda$ to defend against the type of attack that is described in this paper.

\paragraph{Experiment 4: Number of queries as a function of starting distance.}

In this experiment we experimentally confirm that the number of queries that an algorithm needs is linearly dependent on $t$.
We use the same datasets and experimental setting as in Experiment 2. The results are on \cref{fig:all_data_queries_target_dist}.

Combined with results of Experiment 2, we can see that lowering the starting distance provides a trade-off between the running time and the probability of success.

\paragraph{Experiment 5: Efficiency gain over random sampling.}
Finally, we investigated the efficiency gain that our adaptive algorithm achieved over random point sampling. 
The experimental setup is the same as in Experiment 2. Here we use the Random dataset.
For both approaches, we continued rerunning them until a false negative query was found.
The results are presented in \cref{fig:adaptive-random}.

As you can see, when the $\lambda$ and, accordingly, the number of hash functions $L$ are very small, random sampling is more efficient. However, for even modest values of $\lambda$, we already get a significant exponential speed-up.

\paragraph{Experiment 6: Comparison of defense mechanisms.}
In this experiment we compare two different defence mechanisms based on ideas from differential privacy \cite{BeimelKMNSS22} and distance estimation \cite{CherapanamjeriN20}. 

Theorem 1.3. of \cite{CherapanamjeriN20} presents a general robustification mechanism which, when applied to then LSH of \cite{IndykM98}, says that it can be made robust with blow-ups of $O(d \log 1/\delta)$ for memory and $O(\log 1/\delta)$ for query time, where $\delta$ is the failure probability per adaptive query. This is achieved by creating $O(d \log 1/\delta)$ copies of plain LSH. Querying is done by selecting $O(\log 1/\delta)$ of those LSH data structures, relaying the query to them and then choosing any of the returned points as the answer, or $\bot$ if all of the data structure failed to find a point. 

As a side note, the original ADE construction of \cite{CherapanamjeriN20} would also allow to solve the ANN problem. We don't study it in this experiment for two reasons: it would require time per query $\Omega(n)$, which is too slow in our parameter setting, and it is based upon linear sketches, against which our adaptive adversary is not effective.

The generic differentially private conversion of \cite{BeimelKMNSS22} is similar. For a generic estimation data structure, to be able to process $T$ queries against an adaptive adversary, they create $\wt{O}(\sqrt{T})$ copies of the data structure. For each query, they select $\wt{O}(1)$ copies of the data structure and aggregate their output using a differentially private median.
Although their construction is not generally applicable to ANN, as it is a search problem, it would nevertheless still be effective against our adversary, as for the latter it is not important which exact point was returned, but only whether anything was returned at all. 

As both of the data structures can provably defend against an adaptive adversary and because both of them require a lot of additional memory, in this experiment we consider their performance against our adversary when the available memory is limited. Similarly, because rigorous implementation of a differentially private median would be computationally expensive, we instead opt to use a slighlty simplified version adapted to our setting. Namely, after sampling and querying the data structures, we count the number of them that returned a point and those that did not. We add to both values two-sided geometric noise, whose probability mass function is$\Pr[z] = \frac{1 - \alpha}{1 + \alpha} \alpha^{|z|}$, with parameter $\alpha = e^{-1/4}$. If after this perturbation the number of data structures that returned $\bot$ is bigger than the ones that returned a point, we return $\bot$. Otherwise, one of the outputs of the successful data structures is returned.

The results are depicted on \cref{fig:nc-defense,fig:dp-defense}. On both figures, plots annotated with both $\lambda$ and ``q.s.'' correspond to one of the two robustified variation accordingly. $\lambda$ represents the number of copies used, and ``q.s.'' stands for ``query samples'', and represents the number of data structures randomly sampled to answer each query. Each used copy is build with the same parameters as in experiment $2$, and $\lambda = 1$. The plots annotated with only $\lambda$ are the plain LSH constructions with different values of $\lambda$. The idea is that the plots with the same value of $\lambda$ have the same number of initialized hash functions. The used dataset is Random.

Both figures contain 4 subplots. The first subplot represents the probability that our adaptive adversary reports that it has found a false negative query at a desired target distance. Note that here, unlike in our previous experiments, the output of each query is non-deterministic even after the randomness of the hash functions is fixed. Hence we also test how ``good'' of a false negative the reported point is. For each found point, we repeatedly query it to the data structure $100$ times, and depict on the other three subplots the probability of the points discovered by the adversary to still be a negative query for at least $90\%$, $50\%$ and $10\%$ of the repeated queries.

As you can see from the \cref{fig:nc-defense}, the defense based on \cite{CherapanamjeriN20} is performing better than querying the plain LSH with the same and even often greater amount of hash functions. Similarly, the probability that the discovered false negative remain false negatives in at least $50\%$ or $90\%$ of cases is fairly low.

On the other hand, the results for the algorithm based on differential privacy on \cref{fig:dp-defense} show that the probability of reporting a false negative is considerably higher than for a plain LSH even for low values of target distance. This is due to the aggregation mechanism that we use, as it has a chance of reporting nothing even if all sampled data structures returned a positive answer. However, this also means that the adversary has less chances to learn the underlying data structure, which we can see in the fact that the probability of that a repeated query would be a negative one in at least $10\%$ of cases being below $0.5$.

Based on this, it appears that unless in your application you are prepared to tolerate a significantly higher chance of failure for each query, it is better to use the construction based on \cite{CherapanamjeriN20}. Otherwise, a differentially private construction would leak less information to the adversary. Note that both constructions are still computationally faster than directly querying each data structure.



\begin{figure*}[h]
  \centering
  \includegraphics[width=\textwidth]{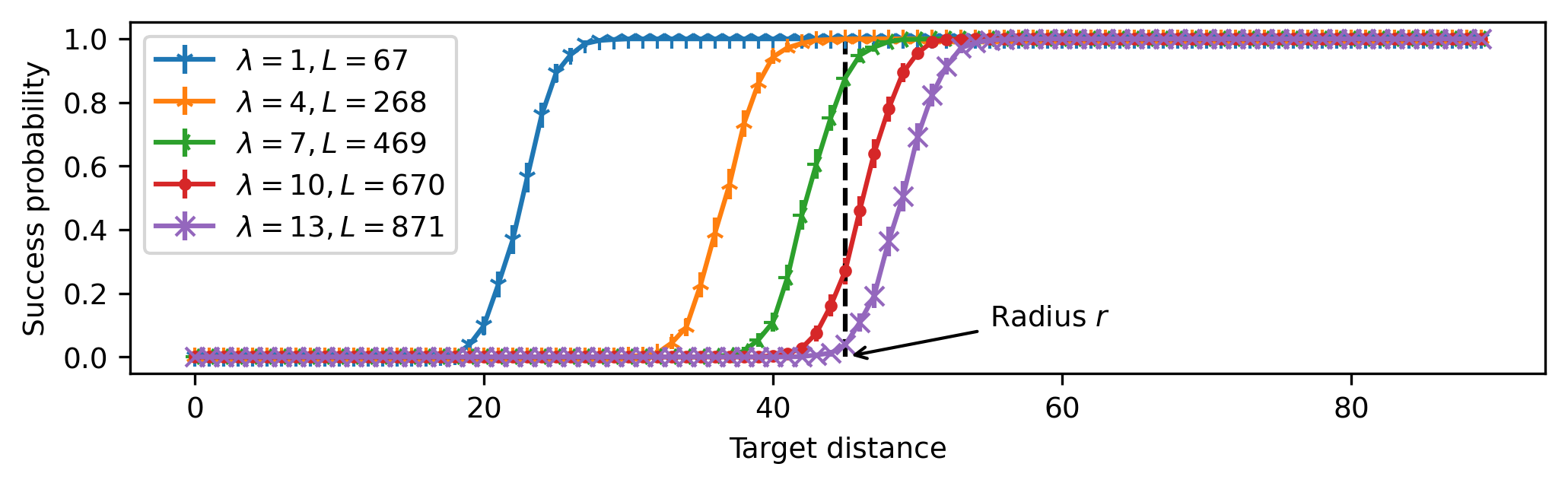}
  \caption{ Desired distance of the query not hashing with the origin point from the origin point $z$ for different values of $\lambda$. The experiments are conducted on the Zero dataset for the default parameter setting. The distance equal to $r$ is denoted by a dashed vertical line.}
  \label{fig:delta_plot}
\end{figure*}

\begin{figure*}[h]
  \centering
  \includegraphics[width=\textwidth]{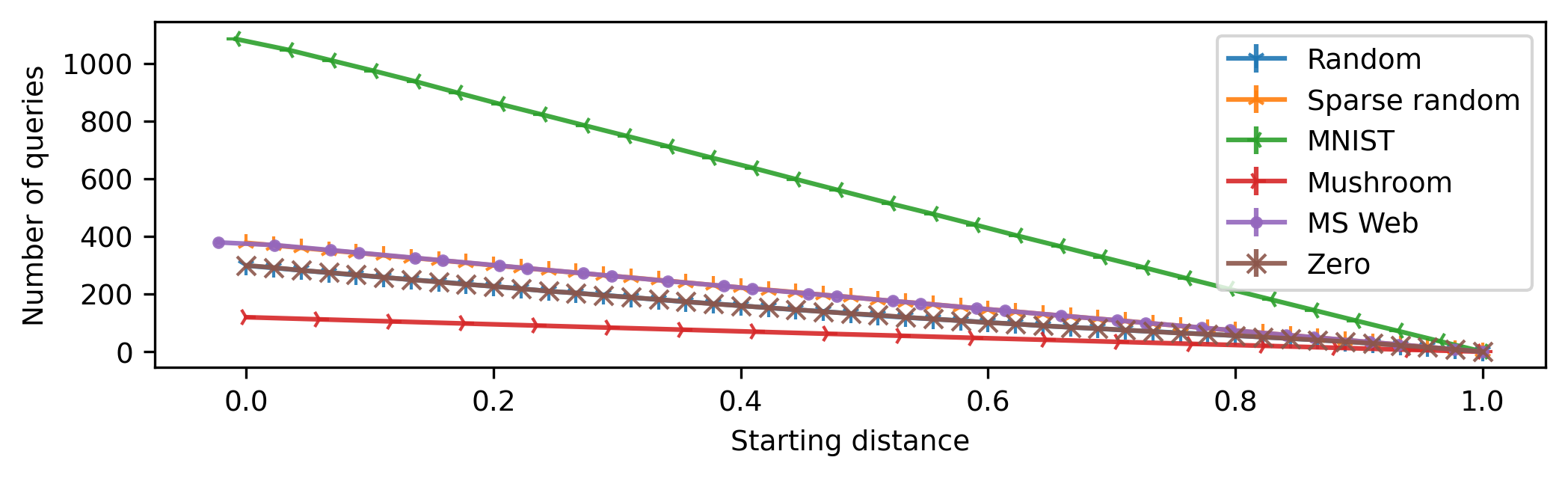}
  \caption{The number of queries as a function of the distance of $q$ from origin point $z$ on all tested datasets. }
  \label{fig:all_data_queries_target_dist}
\end{figure*}

\begin{figure*}[h]
  \centering
  \includegraphics[width=\textwidth]{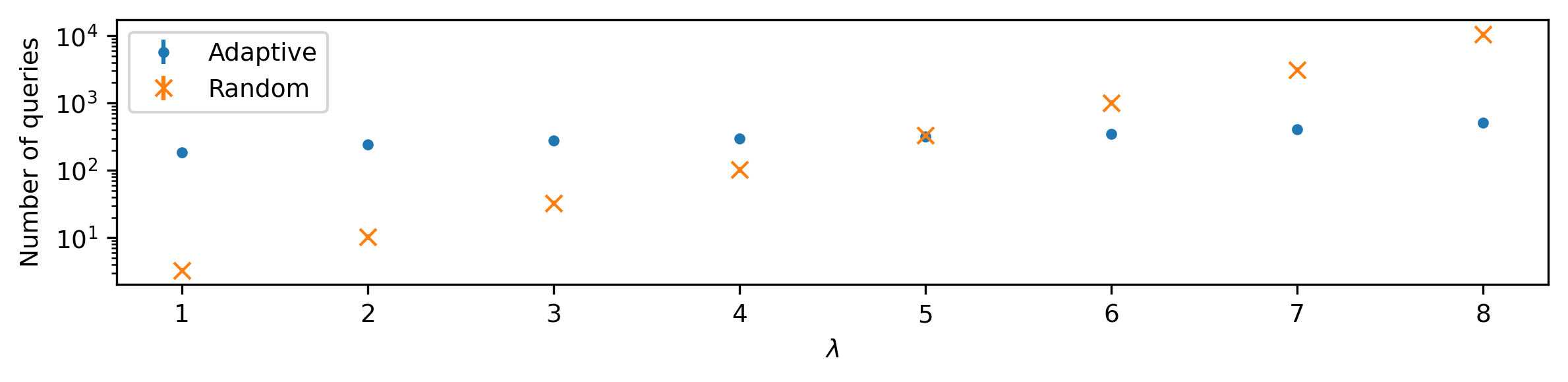}
  \caption{Mean number of queries necessary to find a false negative depending on parameter $\lambda$ required by our adaptive adversary and random sampling of query points.}
  \label{fig:adaptive-random}
\end{figure*}

\begin{figure*}[h]
  \centering
  \includegraphics[width=\textwidth]{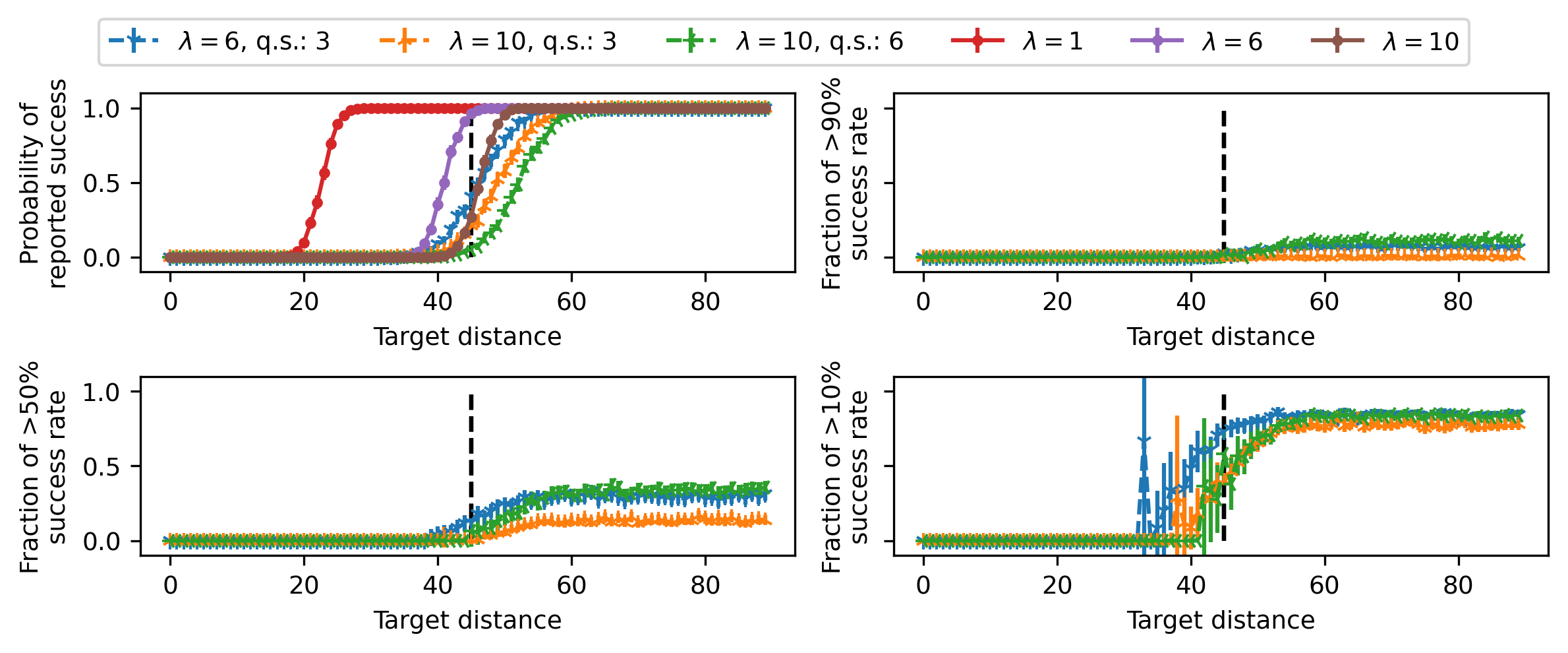}
  \caption{Performance of robustification based on \cite{CherapanamjeriN20}. First subplot depicts probability that our adaptive adversary finds a false negative query versus the desired distance to that query from origin. The rest of subplots depict probabilities that, if such a query was found, if it was queried again it would still be a false negative query with probability at least $90\%$, $50\%$ or $10\%$. In the legend, plots annotated with both $\lambda$ and ``q.s'' correspond to the robust algorithm, with ``q.s.'' standing for ``query samples'', the number of data structures sampled to answer each  query, and $\lambda$ representing the total number of used data structures. The rest are the plain LSH construction with different values of $\lambda$.}
  \label{fig:nc-defense}
\end{figure*}

\begin{figure*}[h]
  \centering
  \includegraphics[width=\textwidth]{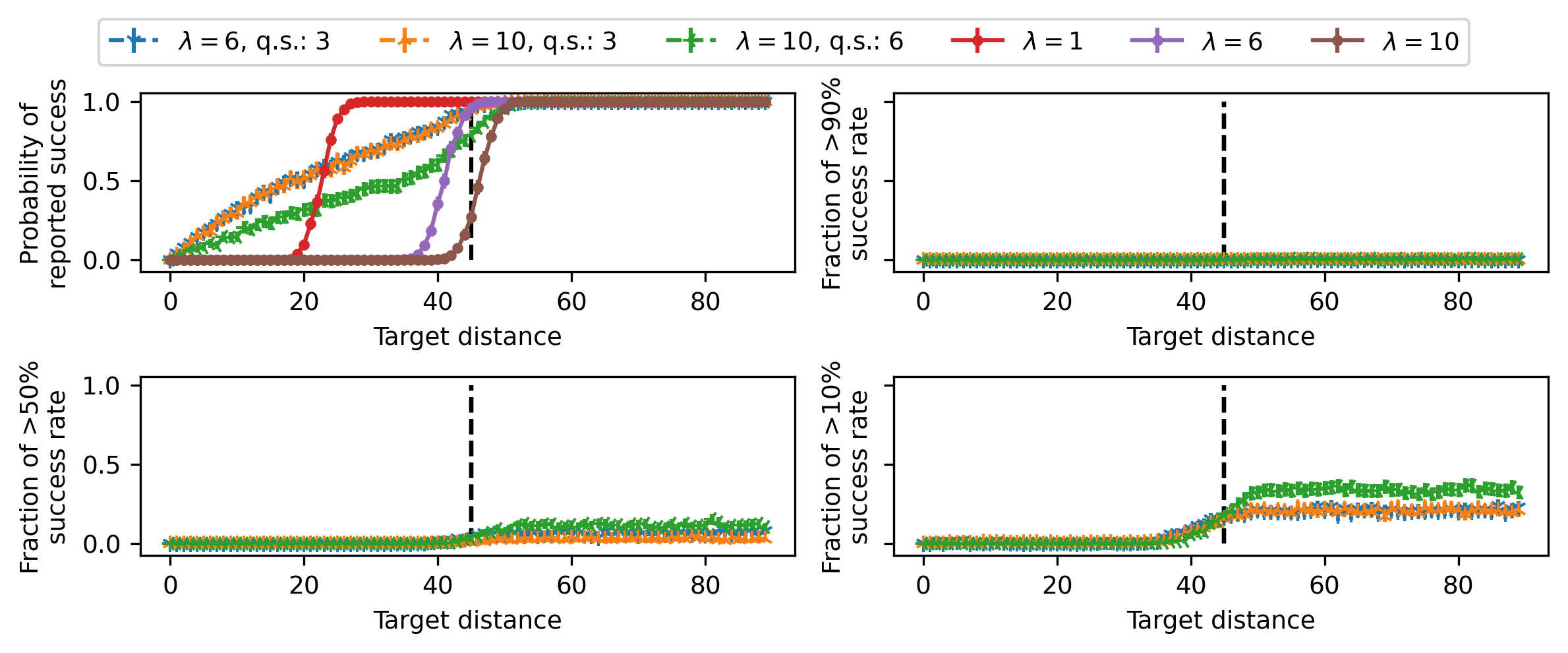}
  \caption{Performance of robustification based on differential privacy \cite{BeimelKMNSS22}. First subplot depicts probability that our adaptive adversary finds a false negative query versus the desired distance to that query from origin. The rest of subplots depict probabilities that, if such a query was found, if it was queried again it would still be a false negative query with probability at least $90\%$, $50\%$ or $10\%$. In the legend, plots annotated with both $\lambda$ and ``q.s'' correspond to the robust algorithm, with ``q.s.'' standing for ``query samples'', the number of data structures sampled to answer each  query, and $\lambda$ representing the total number of used data structures. The rest are the plain LSH construction with different values of $\lambda$.}
  \label{fig:dp-defense}
\end{figure*}

\clearpage

\end{document}